\providecommand{\algorithmname}{Algorithm}
\theoremstyle{plain}
\theoremstyle{plain}
\providecommand{\propositionname}{Proposition}
\providecommand{\theoremname}{Theorem}
\begin{document}
% paper title
% can use linebreaks \\ within to get better formatting as desired

\title{Joint Precoding and Multivariate Backhaul Compression for the Downlink
of\\ Cloud Radio Access Networks}

\author{Seok-Hwan Park, Osvaldo Simeone, Onur Sahin and Shlomo Shamai (Shitz)
\thanks{S.-H. Park and O. Simeone are with the Center for Wireless Communications
and Signal Processing Research (CWCSPR), ECE Department, New Jersey
Institute of Technology (NJIT), Newark, NJ 07102, USA (email: \{seok-hwan.park,
osvaldo.simeone\}@njit.edu).

O. Sahin is with InterDigital Inc., Melville, New York, 11747, USA
(email: Onur.Sahin@interdigital.com).

S. Shamai (Shitz) is with the Department of Electrical Engineering,
Technion, Haifa, 32000, Israel (email: sshlomo@ee.technion.ac.il).%
}}
\maketitle
\begin{abstract}
This work studies the joint design of precoding and backhaul compression
strategies for the downlink of cloud radio access networks. In these
systems, a central encoder is connected to multiple multi-antenna
base stations (BSs) via finite-capacity backhaul links. At the central
encoder, precoding is followed by compression in order to produce
the rate-limited bit streams delivered to each BS over the corresponding
backhaul link. In current state-of-the-art approaches, the signals
intended for different BSs are compressed independently. In contrast,
this work proposes to leverage joint compression, also referred to
as multivariate compression, of the signals of different BSs in order
to better control the effect of the additive quantization noises at
the mobile stations (MSs). The problem of maximizing the weighted
sum-rate with respect to both the precoding matrix and the joint correlation
matrix of the quantization noises is formulated subject to power and
backhaul capacity constraints. An iterative algorithm is proposed
that achieves a stationary point of the problem. Moreover, in order
to enable the practical implementation of multivariate compression
across BSs, a novel architecture is proposed based on successive steps
of minimum mean-squared error (MMSE) estimation and per-BS compression.
Robust design with respect to imperfect channel state information
is also discussed. From numerical results, it is confirmed that the
proposed joint precoding and compression strategy outperforms conventional
approaches based on the separate design of precoding and compression
or independent compression across the BSs.\end{abstract}
\begin{IEEEkeywords}
Cloud radio access network, constrained backhaul, precoding, multivariate
compression, network MIMO.
\end{IEEEkeywords}
\theoremstyle{theorem}
\newtheorem{theorem}{Theorem}
\theoremstyle{proposition}
\newtheorem{proposition}{Proposition}
\theoremstyle{lemma}
\newtheorem{lemma}{Lemma}
\theoremstyle{corollary}
\newtheorem{corollary}{Corollary}
\theoremstyle{definition}
\newtheorem{definition}{Definition}
\theoremstyle{remark}
\newtheorem{remark}{Remark}

\section{Introduction}

Cellular systems are evolving into heterogeneous networks consisting
of distributed base stations (BSs) covering overlapping areas of different
sizes, and thus the problems of interference management and cell association
are becoming complicated and challenging \cite{Andrews}. One of the
most promising solutions to these problems is given by so called \textit{cloud
radio access networks}, in which the encoding/decoding functionalities
of the BSs are migrated to a central unit. This is done by operating
the BSs as ``soft'' relays that interface with the central unit
via backhaul links used to carry only baseband signals (and not ``hard''
data information) \cite{Intel}-\cite{Marsch et al}. Cloud radio access
networks are expected not only to effectively handle the inter-cell
interference but also to lower system cost related to the deployment
and management of the BSs. However, one of the main impairments to
the implementation of cloud radio access networks is given by the
capacity limitations of the digital backhaul links connecting the
BSs and the central unit. These limitations are especially pronounced
for pico/femto-BSs, whose connectivity is often afforded by last-mile
cables \cite{Andrews}\cite{Andrews:femto}, and for BSs using wireless
backhaul links \cite{Maric}.

In the \textit{uplink} of cloud radio access networks, each BS compresses
its received signal to the central unit via its finite-capacity backhaul
link. The central unit then performs joint decoding of all the mobile
stations (MSs) based on all received compressed signals%
\footnote{In fact, joint decompression and decoding, an approach that is now
often seen as an instance of noisy network coding \cite{Lim}, is
generally advantageous \cite{Sanderovich:09}.%
}. Recent theoretical results have shown that \textit{distributed compression}
schemes \cite{ElGamal} can provide significant advantages over the
conventional approach based on independent compression at the BSs.
This is because the signals received by different BSs are statistically
correlated \cite{Sanderovich:09}-\cite{Park}, and hence distributed
source coding enables the quality of the compressed signal received
from one BS to be improved by leveraging the signals received from
the other BSs as side information. Note that the correlation among
the signals received by the BSs is particularly pronounced for systems
with many small cells concentrated in given areas. While current implementations
\cite{Alcatel}\cite{Dresden} employ conventional independent compression
across the BSs, the advantages of distributed source coding were first
demonstrated in \cite{Sanderovich:09}, and then studied in more general
settings in \cite{Sanderovich:MIMO}-\cite{Park}. Related works based
on the idea of computing a function of the transmitted codewords at
the BSs, also known as compute-and-forward, can be found in \cite{Nazer}\cite{Hong:UL}.

In the \textit{downlink} of cloud radio access networks, the central
encoder performs joint encoding of the messages intended for the MSs.
Then, it independently compresses the produced baseband signals to
be transmitted by each BS. These baseband signals are delivered via
the backhaul links to the corresponding BSs, which simply upconvert
and transmit them through their antennas. This system was studied
in \cite{Simeone}\cite{Marsch:GC}. In particular, in \cite{Simeone},
the central encoder performs dirty-paper coding (DPC) \cite{Costa}
of all MSs' signals before compression. A similar approach was studied
in \cite{Marsch:GC} by accounting for the effect of imperfect channel
state information (CSI). Reference \cite{Hong} instead proposes strategies
based on compute-and-forward, showing advantages in the low-backhaul
capacity regime and high sensitivity of the performance to the channel
parameters. For a review of more conventional strategies in which
the backhaul links are used to convey message information, rather
than the compressed baseband signals, we refer to \cite{Evans}-\cite{Simeone:monograph}.

\subsection{Contributions\label{sub:Contributions}}

In this work, we propose a novel approach for the compression on the
backhaul links of cloud radio access networks in the downlink that
can be seen as the counterpart of the distributed source coding strategy
studied in \cite{Sanderovich:MIMO}-\cite{Park} for the uplink. Moreover,
we propose the joint design of precoding and compression. A key idea
is that of allowing the quantization noise signals corresponding to
different BSs to be correlated with each other. The motivation behind
this choice is the fact that a proper design of the correlation of
the quantization noises across the BSs can be beneficial in limiting
the effect of the resulting quantization noise seen at the MSs. In
order to create such correlation, we propose to jointly compress the
baseband signals to be delivered over the backhaul links using so
called \textit{multivariate compression} \cite[Ch. 9]{ElGamal}. We
also show that, in practice, multivariate compression can be implemented
without resorting to joint compression across all BSs, but using instead
a successive compression strategy based on a sequence of Minimum Mean
Squared Error (MMSE) estimation and per-BS compression steps.

After reviewing some preliminaries on multivariate compression in
Sec. \ref{sec:Preliminaries}, we formulate the problem of jointly
optimizing the precoding matrix and the correlation matrix of the
quantization noises with the aim of maximizing the weighted sum-rate
subject to power and the backhaul constraints resulting from multivariate
compression in Sec. \ref{sec:Problem formulation}. There, we also
introduce the proposed architecture based on successive per-BS steps.
We then provide an iterative algorithm that achieves a stationary
point of the problem in Sec. \ref{sec:Joint}. Moreover, we compare
the proposed joint design with the more conventional approaches based
on independent backhaul compression \cite{Simeone}-\cite{Hong} or
on the separate design of precoding and (multivariate) quantization
in Sec. \ref{sec:Separate}. The robust design with respect to imperfect
CSI is also discussed in detail. In Sec. \ref{sec:Numerical-Results},
extensive numerical results are provided to illustrate the advantages
offered by the proposed approach. The paper is terminated with the
conclusion in Sec. \ref{sec:Conclusions}.

\textit{Notation}: We adopt standard information-theoretic definitions
for the mutual information $I(X;Y)$ between the random variables
$X$ and $Y$, conditional mutual information $I(X;Y|Z)$ between
$X$ and $Y$ conditioned on random variable $Z$, differential entropy
$h(X)$ of $X$ and conditional differential entropy $h(X|Y)$ of
$X$ conditioned on $Y$ \cite{ElGamal}. The distribution of a random
variable $X$ is denoted by $p(x)$ and the conditional distribution
of $X$ conditioned on $Y$ is represented by $p(x|y)$. All logarithms
are in base two unless specified. The circularly symmetric complex
Gaussian distribution with mean $\mbox{\boldmath${\mu}$}$ and covariance
matrix $\mathbf{R}$ is denoted by $\mathcal{CN}(\mbox{\boldmath${\mu}$},\bold{R})$.
The set of all $M\times N$ complex matrices is denoted by $\mathbb{C}^{M\times N}$,
and $\mathbb{E}(\cdot)$ represents the expectation operator. We use
the notations $\mathbf{X}\succeq\mathbf{0}$ and $\mathbf{X}\succ\mathbf{0}$
to indicate that the matrix $\mathbf{X}$ is positive semidefinite
and positive definite, respectively. Given a sequence $X_{1},\ldots,X_{m}$,
we define a set $X_{\mathcal{S}}=\{X_{j}|j\in\mathcal{S}\}$ for a
subset $\mathcal{S}\subseteq\{1,\ldots,m\}$. The operation $(\cdot)^{\dagger}$
denotes Hermitian transpose of a matrix or vector, and notation $\mathbf{\Sigma}_{\mathbf{x}}$
is used for the correlation matrix of random vector $\mathbf{x}$,
i.e., $\mathbf{\Sigma}_{\mathbf{x}}=\mathbb{E}[\mathbf{x}\mathbf{x}^{\dagger}]$;
$\mathbf{\Sigma}_{\mathbf{x},\mathbf{y}}$ represents the cross-correlation
matrix $\mathbf{\Sigma}_{\mathbf{x},\mathbf{y}}=\mathbb{E}[\mathbf{x}\mathbf{y}^{\dagger}]$;
$\mathbf{\Sigma}_{\mathbf{x}|\mathbf{y}}$ is used for the conditional
correlation matrix, i.e., $\mathbf{\Sigma}_{\mathbf{x}|\mathbf{y}}=\mathbb{E}[\mathbf{x}\mathbf{x}^{\dagger}|\mathbf{y}]$.

\begin{figure}
\centering\includegraphics[width=15cm,height=11.5cm]{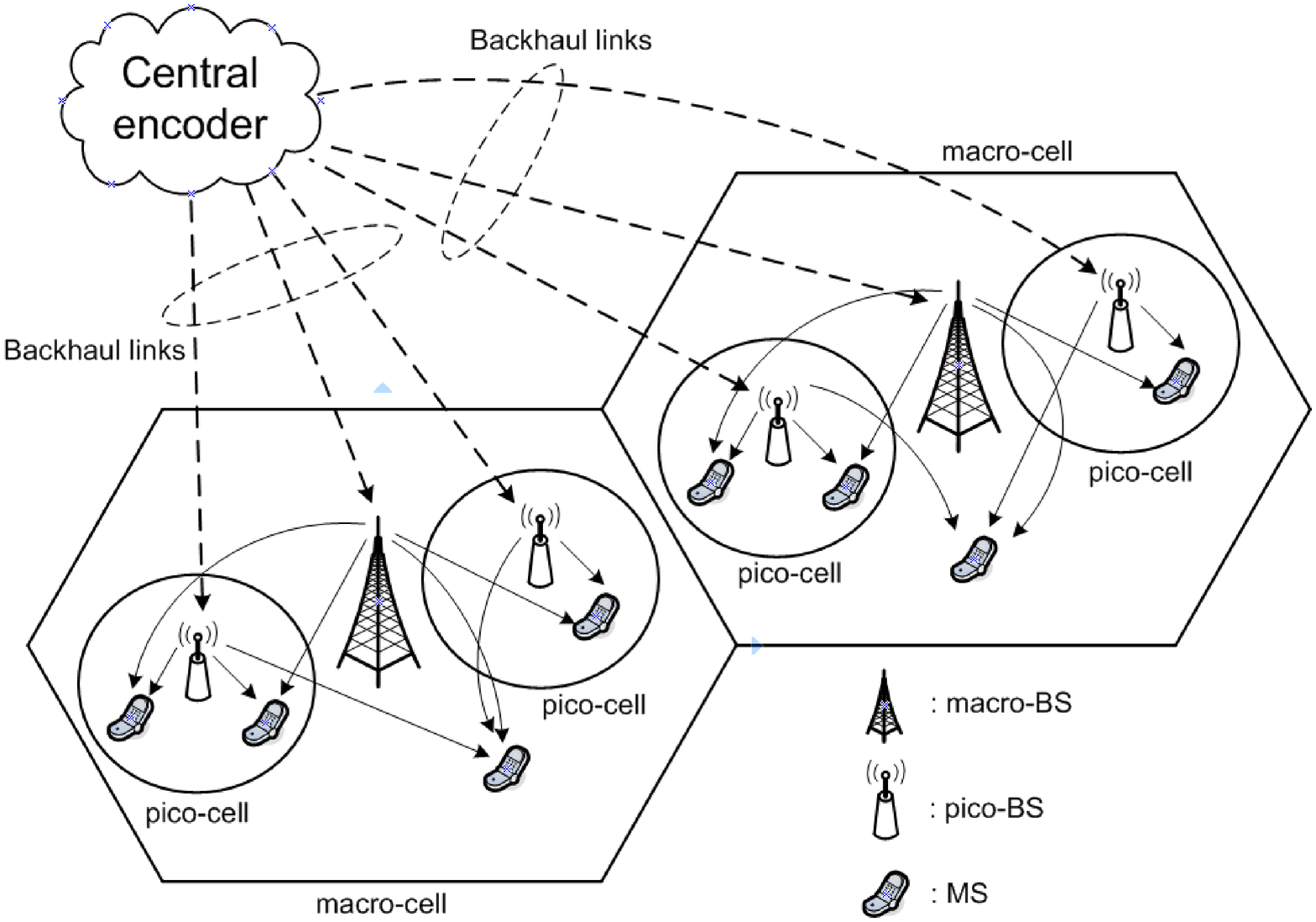}

\caption{\label{fig:block diagram}Downlink communication in a cloud radio
access network in which there are $N_{B}$ multi-antenna BSs and $N_{M}$
multi-antenna MSs. The $N_{B}$ BSs include both macro-BSs and pico/femto-BSs.
The $N_{M}$ MSs are distributed across all the cells.}
\end{figure}

\section{System Model\label{sec:System-Model}}

We consider the downlink of a cloud radio access network as illustrated
in Fig. \ref{fig:block diagram}. In the system, a central encoder
communicates to $N_{M}$ MSs through $N_{B}$ distributed BSs. The
message $M_{k}$ for each $k$th MS is uniformly distributed in the
set $\{1,\ldots,2^{nR_{k}}\}$, where $n$ is blocklength and $R_{k}$
is the information rate of message $M_{k}$ in bits per channel use
(c.u.). Each MS $k$ has $n_{M,k}$ receive antennas for $k=1,\ldots,N_{M}$,
and each BS $i$ is equipped with $n_{B,i}$ antennas for $i=1,\ldots,N_{B}$.
Note that the BSs can be either macro-BSs or pico/femto-BSs and that
the MSs are arbitrarily distributed across the cells. Each $i$th
BS is connected to the central encoder via digital backhaul link with
finite-capacity $C_{i}$ bits per c.u. For notational convenience,
we define $n_{B}=\sum_{i=1}^{N_{B}}n_{B,i}$ as the total number of
transmitting antennas, $n_{M}=\sum_{k=1}^{N_{M}}n_{M,k}$ as the total
number of receive antennas, and the sets $\mathcal{N_{B}}=\{1,\ldots,N_{B}\}$
and $\mathcal{N_{M}}=\{1,\ldots,N_{M}\}$.

As shown in Fig. \ref{fig:central encoder}, each message $M_{k}$
is first encoded by a separate channel encoder, which produces a coded
signal $\mathbf{s}_{k}$. The signal $\mathbf{s}_{k}\in\mathbb{C}^{r_{k}\times1}$
corresponds to the $r_{k}\times1$ vector of encoded symbols intended
for the $k$th MS for a given c.u., and we have $r_{k}\leq n_{M,k}$.
We assume that each coded symbol $\mathbf{s}_{k}$ is taken from a
conventional Gaussian codebook so that we have $\mathbf{s}_{k}\sim\mathcal{CN}(\mathbf{0},\mathbf{I})$.
The signals $\mathbf{s}_{1},\ldots,\mathbf{s}_{N_{M}}$ are further
processed by the central encoder in two stages, namely \textit{precoding}
and \textit{compression}. As is standard practice, precoding is used
in order to control the interference between the data streams intended
for the same MS and for different MSs. Instead, compression is needed
in order to produce the $N_{B}$ rate-limited bit streams delivered
to each BS over the corresponding backhaul link. Specifically, recall
that each BS $i$ receives up to $C_{i}$ bits per c.u. on the backhaul
link from the central encoder. Further discussion on precoding and
compression can be found in Sec. \ref{sec:Problem formulation}.

On the basis of the bits received on the backhaul links, each BS $i$
produces a vector $\mathbf{x}_{i}\in\mathbb{C}^{n_{B,i}\times1}$
for each c.u., which is the baseband signal to be transmitted from
its $n_{B,i}$ antennas. We have the per-BS power constraints%
\footnote{The results in this paper can be immediately extended to the case
with more general power constraints of the form $\mathbb{E}[\mathbf{x}^{\dagger}\mathbf{\mathbf{\Theta}}_{l}\mathbf{x}]\leq\delta_{l}$
for $l\in\{1,\ldots,L\}$, where the matrix $\mathbf{\mathbf{\Theta}}_{l}$
is a non-negative definite matrix (see, e.g., \cite[Sec. II-C]{LZhang}).%
}
\begin{equation}
\mathbb{E}\left[||\mathbf{x}_{i}||^{2}\right]\leq P_{i},\,\,\mathrm{for}\,\, i\in\mathcal{N_{B}}.\label{eq:PerBS power constraint}
\end{equation}
Assuming flat-fading channels, the signal $\mathbf{y}_{k}\in\mathbb{C}^{n_{M,k}}$
received by MS $k$ is written as
\begin{equation}
\mathbf{y}_{k}=\mathbf{H}_{k}\mathbf{x}+\mathbf{z}_{k},\label{eq:received signal MS}
\end{equation}
where we have defined the aggregate transmit signal vector $\mathbf{x}=[\mathbf{x}_{1}^{\dagger},\ldots,\mathbf{x}_{N_{B}}^{\dagger}]^{\dagger}$,
the additive noise $\mathbf{z}_{k}\sim\mathcal{CN}(\mathbf{0},\mathbf{I})$%
\footnote{Correlated noise can be easily accommodated by performing whitening
at each MS $k$ to obtain (\ref{eq:received signal MS}).%
}, and the channel matrix $\mathbf{H}_{k}\in\mathbb{C}^{n_{M,k}\times n_{B}}$
toward MS $k$ as
\begin{equation}
\mathbf{H}_{k}=\left[\mathbf{H}_{k,1}\,\,\mathbf{H}_{k,2}\,\,\cdots\,\,\mathbf{H}_{k,N_{B}}\right],\label{eq:channel to MS k}
\end{equation}
with $\mathbf{H}_{k,i}\in\mathbb{C}^{n_{M,k}\times n_{B,i}}$ denoting
the channel matrix from BS $i$ to MS $k$. The channel matrices remain
constant for the entire coding block duration. We assume that the
central encoder has information about the global channel matrices
$\mathbf{H}_{k}$ for all $k\in\mathcal{N_{M}}$ and that each MS
$k$ is only aware of the channel matrix $\mathbf{H}_{k}$. The BSs
must also be informed about the compression codebooks used by the
central encoder, as further detailed later. The case of imperfect
CSI at the central encoder will be discussed in Sec. \ref{sub:Robust-Design-with}.

Based on the definition given above and assuming single-user detection
at each MS, the rates
\begin{equation}
R_{k}=I\left(\mathbf{s}_{k};\mathbf{y}_{k}\right)\label{eq:rate MS k}
\end{equation}
can be achieved for each MS $k\in\mathcal{N_{M}}$.

\section{Preliminaries\label{sec:Preliminaries}}

This section reviews some basic information-theoretical results concerning
multivariate compression, which will be leveraged in the analysis
of the proposed backhaul compression strategy in Sec. \ref{sub:Multivariate-Backhaul-Compression}.

\begin{figure}
\centering\includegraphics[width=12.1cm,height=7.65cm]{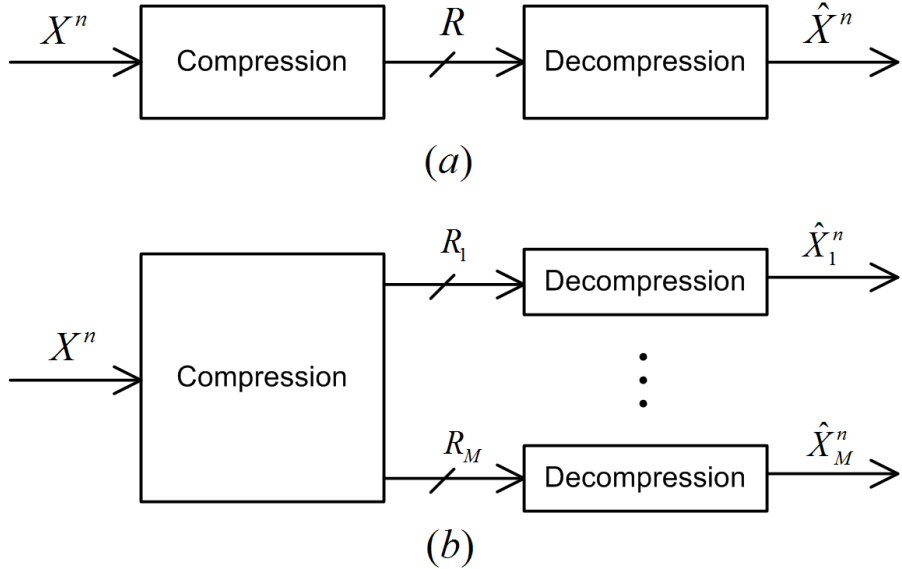}

\caption{\label{fig:compression examples}Illustration of (\textit{a}) conventional
compression; (\textit{b}) multivariate compression.}
\end{figure}

\subsection{Conventional Compression Problem\label{sub:Compression}}

To fix the ideas and the notation, we first consider the conventional
compression problem illustrated in Fig. \ref{fig:compression examples}-(a).
The compression unit compresses a random sequence $X^{n}$ of $n$
independent and identically distributed (i.i.d.) samples with distribution
$p(x)$ at a rate of $R$ bits per symbol. Specifically, the compressor
selects a codeword $\hat{X}^{n}$ within a codebook $\mathcal{C}$
of size $2^{nR}$ and sends the corresponding index, of $nR$ bits,
to the decompression unit. At the decompression unit, the sequence
$\hat{X}^{n}\in\mathcal{C}$ indicated by the received index is recovered.
Using the standard information-theoretic formulation, the compression
strategy is specified by a conditional distribution $p(\hat{x}|x)$,
which is referred to as the \textit{test channel} (see, e.g., \cite[Ch. 2]{ElGamal}).
For a given test channel, compression consists in selecting a sequence
$\hat{X}^{n}\in\mathcal{C}$ that is jointly typical%
\footnote{Two sequences $X^{n}$ and $Y^{n}$ are called jointly typical with
respect to a distribution $p(x,y)$ if their joint empirical distribution
(i.e., normalized histogram with step size $\Delta\rightarrow0$)
does not deviate much from $p(x,y)$ (see, e.g., \cite[Ch. 2]{ElGamal}
for a formal definition).%
} with the sequence $X^{n}$ with respect to the given joint distribution
$p(x,\hat{x})=p(x)p(\hat{x}|x)$. Compression is hence successful
if the encoder is able to find a jointly typical sequence $\hat{X}^{n}$
in the codebook $\mathcal{C}$. A classical result in information
theory is that this happens with arbitrarily large probability as
the block length $n$ grows large if the inequality
\begin{equation}
I\left(X;\hat{X}\right)\leq R\label{eq:conventional compression rate}
\end{equation}
is satisfied \cite[Ch. 3]{ElGamal}\cite{XZhang}.

\subsection{Multivariate Compression Problem\label{sub:Multivariate-Compression Theory}}

We now review the more general multivariate compression illustrated
in Fig. \ref{fig:compression examples}-(b). Here, the sequence $X^{n}$
is compressed into $M$ indices with the goal of producing correlated
compressed versions $\hat{X}_{1}^{n},\ldots,\hat{X}_{M}^{n}$. Each
$i$th index indicates a codeword within a codebook $\mathcal{C}_{i}$
of size $2^{nR_{i}}$, and is sent to the $i$th decompression unit
for $i\in\{1,\ldots,M\}$. Each $i$th decompression unit then recovers
a sequence $\hat{X}_{i}^{n}\in\mathcal{C}_{i}$ corresponding to the
received index. We emphasize that the choice of the codewords $\hat{X}_{1}^{n},\ldots,\hat{X}_{M}^{n}$
is done jointly at the compression unit. In particular, the specification
of the compression strategy is given by a test channel $p(\hat{x}_{1},\ldots,\hat{x}_{M}|x)$.
This implies that the compression unit wishes to find codewords $\hat{X}_{1}^{n},\ldots,\hat{X}_{M}^{n}$
that are jointly typical with the sequence $X^{n}$ with respect to
the given joint distribution $p(x,\hat{x}_{1},\ldots,\hat{x}_{M})=p(x)p(\hat{x}_{1},\ldots,\hat{x}_{M}|x)$.
The following lemma provides a sufficient condition for multivariate
compression to be successful (we refer to \cite[Lemma 8.2]{ElGamal}
for a more precise statement).

\begin{lemma}\label{lem:multivariate covering}

\emph{\cite[Ch. 9]{ElGamal}} Consider an i.i.d. sequence $X^{n}$
and $n$ large enough. Then, there exist codebooks $\mathcal{C}_{1},\ldots,\mathcal{C}_{M}$
with rates $R_{1},\ldots,R_{M}$, that have at least one tuple of
codewords $(\hat{X}_{1}^{n},\ldots,\hat{X}_{M}^{n})\in\mathcal{C}_{1}\times\ldots\times\mathcal{C}_{M}$
jointly typical with $X^{n}$ with respect to the given joint distribution
$p(x,\hat{x}_{1},\ldots,\hat{x}_{M})=p(x)p(\hat{x}_{1},\ldots,\hat{x}_{M}|x)$
with probability arbitrarily close to one, if the inequalities
\begin{equation}
\sum_{i\in\mathcal{S}}h\left(\hat{X}_{i}\right)-h\left(\hat{X}_{\mathcal{S}}|X\right)\leq\sum_{i\in\mathcal{S}}R_{i},\,\,\mathrm{for\,\, all}\,\,\mathcal{S}\subseteq\left\{ 1,\ldots,M\right\} \label{eq:multivariate covering condition}
\end{equation}
are satisfied.

\end{lemma}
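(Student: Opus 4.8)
The plan is to establish the lemma by a random coding argument paired with the second-moment (Chebyshev) method, in the spirit of the mutual covering lemma \cite[Lemma 8.2]{ElGamal}. First I would generate the codebooks at random: for each $i\in\{1,\ldots,M\}$, draw the $2^{nR_i}$ codewords $\hat{X}_i^n(l_i)$, $l_i\in\{1,\ldots,2^{nR_i}\}$, independently, each with i.i.d.\ entries distributed according to the marginal $p(\hat{x}_i)$ induced by $p(x,\hat{x}_1,\ldots,\hat{x}_M)$; all codewords are mutually independent and independent of $X^n$. Conditioning on a typical realization $X^n=x^n$ (an event of probability approaching one), I would introduce, for each index tuple $\mathbf{l}=(l_1,\ldots,l_M)$, the indicator $E(\mathbf{l})$ of the event that $(x^n,\hat{X}_1^n(l_1),\ldots,\hat{X}_M^n(l_M))$ is jointly typical with respect to $p(x,\hat{x}_1,\ldots,\hat{x}_M)$, and set $N=\sum_{\mathbf{l}}E(\mathbf{l})$. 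The claim then follows once I show $\Pr[N=0]\to 0$.

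Second, I would estimate the first moment. Since each codeword is drawn from its own marginal while joint typicality is assessed against the true joint law, standard estimates on the cardinality and probability of (conditional) typical sets give, to first order in the exponent,
\[
\mathbb{E}[E(\mathbf{l})]\doteq 2^{-n\left(\sum_{i=1}^{M}h(\hat{X}_i)-h(\hat{X}_{\{1,\ldots,M\}}\mid X)\right)},
\]
so that $\mathbb{E}[N]\doteq 2^{n\left(\sum_i R_i-\sum_i h(\hat{X}_i)+h(\hat{X}_{\{1,\ldots,M\}}\mid X)\right)}$. This grows exponentially exactly when constraint (\ref{eq:multivariate covering condition}) for the full set $\mathcal{S}=\{1,\ldots,M\}$ holds, with the typicality slack $\delta(\epsilon)$ absorbed in the usual way.

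Third, and this is the crux, I would control the second moment $\mathbb{E}[N^2]=\sum_{\mathbf{l},\mathbf{l}'}\mathbb{E}[E(\mathbf{l})E(\mathbf{l}')]$. Here the pairs $(\mathbf{l},\mathbf{l}')$ must be grouped according to the overlap set $\mathcal{S}=\{i:l_i=l_i'\}$ of shared codeword indices: on $\mathcal{S}$ the two tuples reuse the same (fully correlated) codewords, while off $\mathcal{S}$ they use independent ones. Conditioning on the shared codewords $\hat{X}_{\mathcal{S}}$, the two completions become conditionally independent, so each joint probability factorizes into a product of conditional typicality probabilities; multiplying by the number of pairs with overlap $\mathcal{S}$ and dividing by $\mathbb{E}[N]^2$, the contribution of pattern $\mathcal{S}$ collapses, to first exponential order, to
\[
2^{\,n\left(\sum_{i\in\mathcal{S}}h(\hat{X}_i)-h(\hat{X}_{\mathcal{S}}\mid X)-\sum_{i\in\mathcal{S}}R_i\right)}.
\]
Thus this term stays bounded precisely when inequality (\ref{eq:multivariate covering condition}) holds for $\mathcal{S}$. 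The delicate part is exactly this bookkeeping over all $2^M$ overlap patterns, and it reveals why all the subset constraints --- not merely the full-set one --- are needed: the full set controls the mean (and hence the diagonal $\mathbf{l}=\mathbf{l}'$ term), the empty set gives the harmless fully-independent term of size $\mathbb{E}[N]^2$, and each nonempty proper subset $\mathcal{S}$ suppresses the correlation among tuples that share exactly the codewords indexed by $\mathcal{S}$.

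Finally, combining the two moments yields $\mathrm{Var}(N)/\mathbb{E}[N]^2\to 0$ under all the hypotheses, so that Chebyshev's inequality gives $\Pr[N=0]\le \mathrm{Var}(N)/\mathbb{E}[N]^2\to 0$. Averaging over the (high-probability) typical event for $X^n$ and then extracting a single deterministic collection of codebooks by the standard selection argument completes the proof; the precise handling of the $\epsilon$-slack follows \cite[Lemma 8.2]{ElGamal}.
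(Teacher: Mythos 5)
Your proposal is correct, and it coincides with the argument the paper itself relies on: the paper's ``proof'' of this lemma is a one-line citation to \cite[Ch. 9]{ElGamal}, and the random-coding-plus-second-moment argument you sketch --- codewords drawn from the marginals, $N$ counting jointly typical tuples, overlap patterns $\mathcal{S}$ of shared indices controlling $\mathrm{Var}(N)$ while the full-set constraint controls $\mathbb{E}[N]$ --- is precisely the proof of the multivariate (mutual) covering lemma in that reference, so you have simply written out what the paper defers. The only caveat, which you already flag, is that the inequalities must hold strictly with a $\delta(\epsilon)$ slack for $\mathrm{Var}(N)/\mathbb{E}[N]^{2}\rightarrow0$; the paper's non-strict statement glosses over this, which is exactly why it points to \cite[Lemma 8.2]{ElGamal} for ``a more precise statement.''
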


\begin{proof}See \cite[Ch. 9]{ElGamal} for a proof. \end{proof}

We observe that, for a given test channel $p(\hat{x}_{1},\ldots,\hat{x}_{M}|x)$,
the inequalities (\ref{eq:multivariate covering condition}) impose
joint conditions on the rate of all codebooks. This is due to the
requirement of finding codewords $\hat{X}_{1},\ldots,\hat{X}_{M}$
that are jointly correlated according to the given test channel $p(\hat{x}_{1},\ldots,\hat{x}_{M}|x)$.
Also, note that the vector $X^{n}$ may be such that each $X_{i}$
is itself a vector and that the distortion requirements at each decompression
unit prescribe that the decompression unit be interested in only a
subset of entries in this vector. The connection between the multivariate
set-up in Fig. \ref{fig:compression examples}-(b) and the system
model under study in Fig. \ref{fig:block diagram} will be detailed
in the next section.

\section{Proposed Approach and Problem Definition\label{sec:Problem formulation}}

In this section, we first propose a novel precoding-compression strategy
based on multivariate compression for the downlink of a cloud radio
access network. We then establish the problem definition. Finally,
a novel architecture that implements multivariate compression via
a sequence of MMSE estimation and per-BS compression steps is proposed.

\subsection{Encoding Operation at the Central Encoder\label{sub:Operation-cloud encoder}}

\begin{figure}
\centering\includegraphics[width=17cm,height=7.5cm]{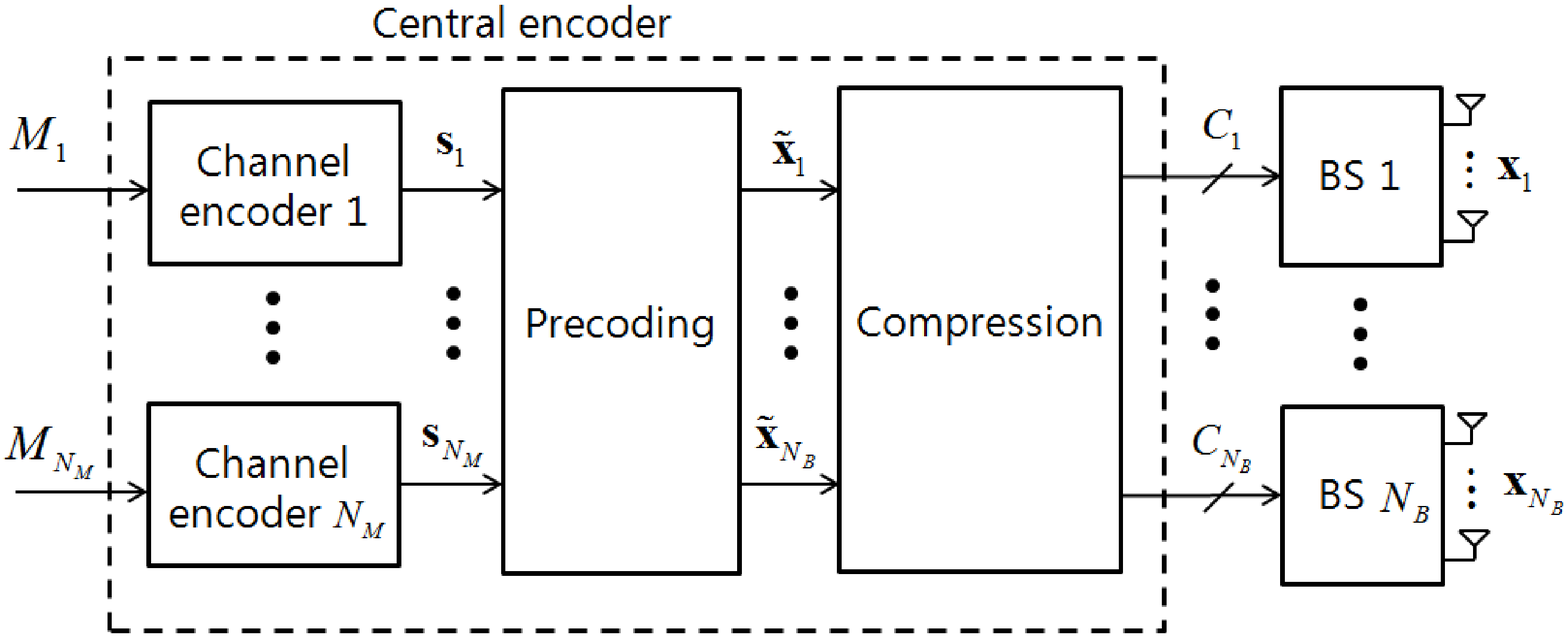}

\caption{\label{fig:central encoder}Illustration of the operation at the central
encoder.}
\end{figure}

As mentioned in the previous section, the operation at the central
encoder can be represented by the block diagram in Fig. \ref{fig:central encoder}.
Specifically, after channel encoding, the encoded signals $\mathbf{s}=[\mathbf{s}_{1}^{\dagger},\ldots,\mathbf{s}_{N_{M}}^{\dagger}]^{\dagger}$
undergo precoding and compression, as detailed next.

\textbf{1. Precoding:} In order to allow for interference management
both across the MSs and among the data streams for the same MS, the
signals in vector $\mathbf{s}$ are linearly precoded via multiplication
of a complex matrix $\mathbf{A}\in\mathbb{C}^{n_{B}\times n_{M}}$.
The precoded data can be written as
\begin{equation}
\tilde{\mathbf{x}}=\mathbf{A}\mathbf{s},\label{eq:precoding only}
\end{equation}
where the matrix $\mathbf{A}$ can be factorized as
\begin{equation}
\mathbf{A}=\left[\mathbf{A}_{1}\,\cdots\,\mathbf{A}_{N_{M}}\right],\label{eq:whole beamformer}
\end{equation}
with $\mathbf{A}_{k}\in\mathbb{C}^{n_{B}\times n_{M,k}}$ denoting
the precoding matrix corresponding to MS $k$. The precoded data $\tilde{\mathbf{x}}$
can be written as $\tilde{\mathbf{x}}=[\tilde{\mathbf{x}}_{1}^{\dagger},\ldots,\tilde{\mathbf{x}}_{N_{B}}^{\dagger}]^{\dagger}$,
where the signal $\tilde{\mathbf{x}}_{i}$ is the $n_{B,i}\times1$
precoded vector corresponding to the $i$th BS and given as
\begin{equation}
\tilde{\mathbf{x}}_{i}=\mathbf{E}_{i}^{\dagger}\mathbf{A}\mathbf{s},\label{eq:precoding BS-wise}
\end{equation}
with the matrix $\mathbf{E}_{i}\in\mathbb{C}^{n_{B}\times n_{B,i}}$
having all zero elements except for the rows from $(\sum_{j=1}^{i-1}n_{B,j}+1)$
to $(\sum_{j=1}^{i}n_{B,j})$ which contain an $n_{B,i}\times n_{B,i}$
identity matrix. Note that non-linear precoding using DPC techniques
can also be considered, as discussed in Remark \ref{rem:DPC} below.

\textbf{2. Compression:} Each precoded data stream $\tilde{\mathbf{x}}_{i}$
for $i\in\mathcal{N_{B}}$ must be compressed in order to allow the
central encoder to deliver it to the $i$th BS through the backhaul
link of capacity $C_{i}$ bits per c.u. Each $i$th BS then simply
forwards the compressed signal $\mathbf{x}_{i}$ obtained from the
central encoder. Note that this implies that the BSs need not be aware
of the channel codebooks and of the precoding matrix $\mathbf{A}$
used by the central encoder. Instead, they must be informed about
the quantization codebooks selected by the central encoder.

Using standard rate-distortion considerations, we adopt a Gaussian
test channel to model the effect of compression on the backhaul link.
In particular, we write the compressed signals $\mathbf{x}_{i}$ to
be transmitted by BS $i$ as%
\footnote{The test channel $\mathbf{x}_{i}=\mathbf{B}_{i}\tilde{\mathbf{x}}_{i}+\mathbf{q}_{i}$
is seemingly more general than (\ref{eq:Gaussian test channel each BS}),
but this can be captured by adjusting the matrix $\mathbf{A}$ in
(\ref{eq:precoding only}).%
}
\begin{align}
\mathbf{x}_{i} & =\tilde{\mathbf{x}}_{i}+\mathbf{q}_{i},\label{eq:Gaussian test channel each BS}
\end{align}
where the compression noise $\mathbf{q}_{i}$ is modeled as a complex
Gaussian vector distributed as $\mathcal{CN}(\mathbf{0},\mathbf{\Omega}_{i,i})$.
Overall, the vector $\mathbf{x}=[\mathbf{x}_{1}^{\dagger},\ldots,\mathbf{x}_{N_{B}}^{\dagger}]^{\dagger}$
of compressed signals for all the BSs is given by
\begin{equation}
\mathbf{x}=\mathbf{A}\mathbf{s}+\mathbf{q},\label{eq:whole encoding operation}
\end{equation}
where the compression noise $\mathbf{q}=[\mathbf{q}_{1}^{\dagger},\ldots,\mathbf{q}_{N_{B}}^{\dagger}]^{\dagger}$
is modeled as a complex Gaussian vector distributed as $\mathbf{q}\sim\mathcal{CN}(\mathbf{0},\mathbf{\Omega})$.
The compression covariance $\mathbf{\Omega}$ is given as
\begin{equation}
\mathbf{\Omega}=\left[\begin{array}{cccc}
\mathbf{\Omega}_{1,1} & \mathbf{\Omega}_{1,2} & \cdots & \mathbf{\Omega}_{1,N_{B}}\\
\mathbf{\Omega}_{2,1} & \mathbf{\Omega}_{2,2} & \cdots & \mathbf{\Omega}_{2,N_{B}}\\
\vdots & \vdots & \ddots & \vdots\\
\mathbf{\Omega}_{N_{B},1} & \mathbf{\Omega}_{N_{B},2} & \cdots & \mathbf{\Omega}_{N_{B},N_{B}}
\end{array}\right],\label{eq:compression covariance}
\end{equation}
where the matrix $\mathbf{\Omega}_{i,j}$ is defined as $\mathbf{\Omega}_{i,j}=\mathbb{E}[\mathbf{q}_{i}\mathbf{q}_{j}^{\dagger}]$
and defines the correlation between the quantization noises of BS
$i$ and BS $j$. Rate-distortion theory guarantees that compression
codebooks can be found for any given covariance matrix $\mathbf{\Omega}\succeq\mathbf{0}$
under appropriate constraints imposed on the backhaul links' capacities.
This aspect will be further discussed in Sec. \ref{sub:Multivariate-Compression Theory}.

With the described precoding and compression operations, the achievable
rate (\ref{eq:rate MS k}) for MS $k$ is computed as
\begin{align}
I\left(\mathbf{s}_{k};\mathbf{y}_{k}\right) & =f_{k}\left(\mathbf{A},\mathbf{\Omega}\right)\label{eq:rate MS k computed}\\
 & \triangleq\log\det\left(\mathbf{I}+\mathbf{H}_{k}\left(\mathbf{A}\mathbf{A}^{\dagger}+\mathbf{\Omega}\right)\mathbf{H}_{k}^{\dagger}\right)-\log\det\left(\mathbf{I}+\mathbf{H}_{k}\left(\sum_{l\in\mathcal{N_{M}}\setminus\{k\}}\mathbf{A}_{l}\mathbf{A}_{l}^{\dagger}+\mathbf{\Omega}\right)\mathbf{H}_{k}^{\dagger}\right).\nonumber
\end{align}

\begin{remark}\label{rem:compression}In the conventional approach
studied in \cite{Simeone}-\cite{Hong}, the signals $\tilde{\mathbf{x}}_{i}$
corresponding to each BS $i$ are compressed independently. This corresponds
to setting $\mathbf{\Omega}_{i,j}=\mathbf{0}$ for all $i\neq j$
in (\ref{eq:compression covariance}). A key contribution of this
work is the proposal to leverage correlated compression for the signals
of different BSs in order to better control the effect of the additive
quantization noises at the MSs. \end{remark}

\begin{remark}The design of the precoding matrix $\mathbf{A}$ and
of the quantization covariance $\mathbf{\Omega}$ can be either performed
separately, e.g., by using a conventional precoder $\mathbf{A}$ such
as zero-forcing (ZF) or MMSE precoding (see, e.g., \cite{RZhang}-\cite{Luo}),
or jointly. Both approaches will be investigated in the following.
\end{remark}

\begin{remark}\label{rem:DPC}If non-linear precoding via DPC \cite{Costa}
is deployed at the central encoder with a specific encoding permutation
$\pi_{\mathrm{DPC}}:\mathcal{N_{M}}\rightarrow\mathcal{N_{M}}$ of
the MS indices $\mathcal{N_{M}}$, the achievable rate $R_{\pi_{\mathrm{DPC}}(k)}$
for MS $\pi_{\mathrm{DPC}}(k)$ is given as $R_{\pi_{\mathrm{DPC}}(k)}=I\left(\mathbf{s}_{\pi_{\mathrm{DPC}}(k)};\mathbf{y}_{\pi_{\mathrm{DPC}}(k)}|\mathbf{s}_{\pi_{\mathrm{DPC}}(1)},\ldots,\mathbf{s}_{\pi_{\mathrm{DPC}}(k-1)}\right)$
in lieu of (\ref{eq:rate MS k}) and can be calculated as $R_{\pi_{\mathrm{DPC}}(k)}=\tilde{f}_{\pi_{\mathrm{DPC}}(k)}\left(\mathbf{A},\mathbf{\Omega}\right)$
with the function $\tilde{f}_{\pi_{\mathrm{DPC}}(k)}\left(\mathbf{A},\mathbf{\Omega}\right)$
given as
\begin{align}
\tilde{f}_{\pi_{\mathrm{DPC}}(k)}\left(\mathbf{A},\mathbf{\Omega}\right) & \triangleq\log\det\left(\mathbf{I}+\mathbf{H}_{\pi_{\mathrm{DPC}}(k)}\left(\sum_{l=k}^{N_{M}}\mathbf{A}_{\pi_{\mathrm{DPC}}(l)}\mathbf{A}_{\pi_{\mathrm{DPC}}(l)}^{\dagger}+\mathbf{\Omega}\right)\mathbf{H}_{\pi_{\mathrm{DPC}}(k)}^{\dagger}\right)\\
 & -\log\det\left(\mathbf{I}+\mathbf{H}_{\pi_{\mathrm{DPC}}(k)}\left(\sum_{l=k+1}^{N_{M}}\mathbf{A}_{\pi_{\mathrm{DPC}}(l)}\mathbf{A}_{\pi_{\mathrm{DPC}}(l)}^{\dagger}+\mathbf{\Omega}\right)\mathbf{H}_{\pi_{\mathrm{DPC}}(k)}^{\dagger}\right).\nonumber
\end{align}
Note that the DPC is designed based on the knowledge of the noise
levels (including the quantization noise) in order to properly select
the MMSE scaling factor \cite{Erez}.

\end{remark}

\subsection{Multivariate Backhaul Compression\label{sub:Multivariate-Backhaul-Compression}}

As explained above, due to the fact that the BSs are connected to
the central encoder via finite-capacity backhaul links, the precoded
signals $\tilde{\mathbf{x}}_{i}$ in (\ref{eq:precoding BS-wise})
for $i\in\mathcal{N_{B}}$ are compressed before they are communicated
to the BSs using the Gaussian test channels (\ref{eq:Gaussian test channel each BS}).
In the conventional case in which the compression noise signals related
to the different BSs are uncorrelated, i.e., $\mathbf{\Omega}_{i,j}=\mathbf{0}$
for all $i\neq j\in\mathcal{N_{B}}$ as in \cite{Simeone}-\cite{Hong},
the signal $\mathbf{x}_{i}$ to be emitted from BS $i$ can be reliably
communicated from the central encoder to BS $i$ if the condition
\begin{equation}
I\left(\tilde{\mathbf{x}}_{i};\mathbf{x}_{i}\right)=\log\det\left(\mathbf{E}_{i}^{\dagger}\mathbf{A}\mathbf{A}^{\dagger}\mathbf{E}_{i}+\mathbf{\Omega}_{i,i}\right)-\log\det\left(\mathbf{\Omega}_{i,i}\right)\leq C_{i}\label{eq:independent comp condition}
\end{equation}
is satisfied for $i\in\mathcal{N_{B}}$. This follows from standard
rate-distortion theoretic arguments (see, e.g., \cite{XZhang} and
Sec. \ref{sub:Compression}). We emphasize that (\ref{eq:independent comp condition})
is valid under the assumption that each BS $i$ is informed about
the quantization codebook used by the central encoder, as defined
by the covariance matrix $\mathbf{\Omega}_{i,i}$.

In this paper, we instead propose to introduce correlation among the
compression noise signals, i.e., to set $\mathbf{\Omega}_{i,j}\neq\mathbf{0}$
for $i\neq j$, in order to control the effect of the quantization
noise at the MSs. As discussed in Sec. \ref{sec:Preliminaries}, introducing
correlated quantization noises calls for joint, and not independent,
compression of the precoded signals of different BSs. As seen, the
family of compression strategies that produce descriptions with correlated
compression noises is often referred to as \textit{multivariate compression.}
By choosing the test channel according to (\ref{eq:whole encoding operation})
(see Sec. \ref{sub:Multivariate-Compression Theory}), we can leverage
Lemma \ref{lem:multivariate covering} to obtain sufficient conditions
for the signal $\mathbf{x}_{i}$ to be reliably delivered to BS $i$
for all $i\in\mathcal{N_{B}}$. In Lemma \ref{lem:multivariate},
we use $\mathbf{E}_{\mathcal{S}}$ to denote the matrix obtained by
stacking the matrices $\mathbf{E}_{i}$ for $i\in\mathcal{S}$ horizontally.

\begin{lemma}\label{lem:multivariate} The signals $\mathbf{x}_{1},\ldots,\mathbf{x}_{N_{B}}$
obtained via the test channel (\ref{eq:whole encoding operation})
can be reliably transferred to the BSs on the backhaul links if the
condition
\begin{align}
g_{\mathcal{S}}\left(\mathbf{A},\mathbf{\Omega}\right) & \triangleq\sum_{i\in\mathcal{S}}h\left(\mathbf{x}_{i}\right)-h\left(\mathbf{x}_{\mathcal{S}}|\tilde{\mathbf{x}}\right)\label{eq:multivariate computed}\\
 & =\sum_{i\in\mathcal{S}}\log\det\left(\mathbf{E}_{i}^{\dagger}\mathbf{A}\mathbf{A}^{\dagger}\mathbf{E}_{i}+\mathbf{\Omega}_{i,i}\right)-\log\det\left(\mathbf{E}_{\mathcal{S}}^{\dagger}\mathbf{\Omega}\mathbf{E}_{\mathcal{S}}\right)\leq\sum_{i\in\mathcal{S}}C_{i}\nonumber
\end{align}
is satisfied for all subsets $\mathcal{S}\subseteq\mathcal{N_{B}}$.

\end{lemma}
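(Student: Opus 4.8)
The plan is to specialize the multivariate covering Lemma~\ref{lem:multivariate covering} to the Gaussian test channel (\ref{eq:whole encoding operation}). First I would identify the source $X$ of Lemma~\ref{lem:multivariate covering} with the precoded vector $\tilde{\mathbf{x}}=\mathbf{A}\mathbf{s}$, the $M=N_{B}$ descriptions $\hat{X}_{1},\ldots,\hat{X}_{M}$ with the per-BS compressed signals $\mathbf{x}_{1},\ldots,\mathbf{x}_{N_{B}}$, and the codebook rates $R_{1},\ldots,R_{M}$ with the backhaul capacities $C_{1},\ldots,C_{N_{B}}$. Each description $\mathbf{x}_{i}$ is itself a vector of dimension $n_{B,i}$, which is admissible by the vector generalization noted after Lemma~\ref{lem:multivariate covering}. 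The jointly Gaussian law $p(\tilde{\mathbf{x}},\mathbf{x}_{1},\ldots,\mathbf{x}_{N_{B}})$ induced by (\ref{eq:whole encoding operation}) then plays the role of the prescribed test channel: the central encoder searches for codewords $(\mathbf{x}_{1}^{n},\ldots,\mathbf{x}_{N_{B}}^{n})$ jointly typical with $\tilde{\mathbf{x}}^{n}$ and forwards the index of codebook $\mathcal{C}_{i}$, consisting of $nR_{i}\le nC_{i}$ bits, over the $i$th backhaul link, so that BS $i$ can recover $\mathbf{x}_{i}^{n}$.

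With this correspondence, setting $R_{i}=C_{i}$ in (\ref{eq:multivariate covering condition}) shows that suitable jointly typical codewords exist with probability close to one provided
\[
\sum_{i\in\mathcal{S}}h\left(\mathbf{x}_{i}\right)-h\left(\mathbf{x}_{\mathcal{S}}|\tilde{\mathbf{x}}\right)\le\sum_{i\in\mathcal{S}}C_{i}\quad\text{for all }\mathcal{S}\subseteq\mathcal{N_{B}}.
\]
It then remains to evaluate the two differential entropies under the Gaussian model. Since $\mathbf{s}\sim\mathcal{CN}(\mathbf{0},\mathbf{I})$ and $\mathbf{q}\sim\mathcal{CN}(\mathbf{0},\mathbf{\Omega})$ are independent, each marginal $\mathbf{x}_{i}=\mathbf{E}_{i}^{\dagger}\mathbf{A}\mathbf{s}+\mathbf{q}_{i}$ is zero-mean complex Gaussian with covariance $\mathbf{E}_{i}^{\dagger}\mathbf{A}\mathbf{A}^{\dagger}\mathbf{E}_{i}+\mathbf{\Omega}_{i,i}$, so that $h(\mathbf{x}_{i})=\log\det(\pi e(\mathbf{E}_{i}^{\dagger}\mathbf{A}\mathbf{A}^{\dagger}\mathbf{E}_{i}+\mathbf{\Omega}_{i,i}))$. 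Conditioning on $\tilde{\mathbf{x}}$ removes the signal term, so $\mathbf{x}_{\mathcal{S}}|\tilde{\mathbf{x}}$ reduces to the stacked quantization noise $\mathbf{E}_{\mathcal{S}}^{\dagger}\mathbf{q}$ with covariance $\mathbf{E}_{\mathcal{S}}^{\dagger}\mathbf{\Omega}\mathbf{E}_{\mathcal{S}}$, whence $h(\mathbf{x}_{\mathcal{S}}|\tilde{\mathbf{x}})=\log\det(\pi e\,\mathbf{E}_{\mathcal{S}}^{\dagger}\mathbf{\Omega}\mathbf{E}_{\mathcal{S}})$.

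The final step is a dimension-counting check: the total dimension of $\{\mathbf{x}_{i}\}_{i\in\mathcal{S}}$ equals $\sum_{i\in\mathcal{S}}n_{B,i}$, which is exactly the dimension of the conditional vector $\mathbf{x}_{\mathcal{S}}$, so the additive $(\sum_{i\in\mathcal{S}}n_{B,i})\log(\pi e)$ contributions of the two entropies cancel in the difference, leaving precisely $g_{\mathcal{S}}(\mathbf{A},\mathbf{\Omega})$ in (\ref{eq:multivariate computed}). I do not expect a serious obstacle, since the statement is a direct specialization of Lemma~\ref{lem:multivariate covering}; the only points demanding care are the correct identification of the conditional covariance as the full block $\mathbf{E}_{\mathcal{S}}^{\dagger}\mathbf{\Omega}\mathbf{E}_{\mathcal{S}}$ (and not a block-diagonal surrogate, which would hold only under independent compression as in (\ref{eq:independent comp condition})) and the verification that the $\log(\pi e)$ normalizations cancel, so that the backhaul constraint is expressed purely through log-determinants of covariance matrices.
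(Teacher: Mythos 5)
Your proposal is correct and follows exactly the paper's route: the paper's proof is precisely the substitution of $\tilde{\mathbf{x}}=\mathbf{A}\mathbf{s}$ for $X$ and $\mathbf{x}_{1},\ldots,\mathbf{x}_{N_{B}}$ for $\hat{X}_{1},\ldots,\hat{X}_{M}$ in Lemma~\ref{lem:multivariate covering}, with rates set to the backhaul capacities $C_{i}$. Your additional evaluation of the Gaussian differential entropies and the cancellation of the $\log(\pi e)$ terms is a correct filling-in of details the paper leaves implicit in passing from (\ref{eq:multivariate covering condition}) to the log-determinant form of (\ref{eq:multivariate computed}).
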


\begin{proof} The proof follows by applying Lemma \ref{lem:multivariate covering}
by substituting $\tilde{\mathbf{x}}=\mathbf{A}\mathbf{s}$ for the
signal $X$ to be compressed, and $\mathbf{x}_{1},\ldots,\mathbf{x}_{N_{B}}$
for the compressed versions $\hat{X}_{1},\ldots,\hat{X}_{M}$. \end{proof}

Comparing (\ref{eq:independent comp condition}) with (\ref{eq:multivariate computed})
shows that the introduction of correlation among the quantization
noises for different BSs leads to additional constraints on the backhaul
link capacities.

\subsection{Weighted Sum-Rate Maximization\label{sub:Sum-Rate-Maximization}}

Assuming the operation at the central encoder, BSs and MSs detailed
above, we are interested in maximizing the weighted sum-rate $R_{\mathrm{sum}}=\sum_{k=1}^{N_{M}}w_{k}R_{k}$
subject to the backhaul constraints (\ref{eq:multivariate computed})
over the precoding matrix $\mathbf{A}$ and the compression noise
covariance $\mathbf{\Omega}$ for given weights $w_{k}\geq0$, $k\in\mathcal{N_{M}}$.
This problem is formulated as \begin{subequations}\label{eq:original problem}
\begin{align}
\underset{\mathbf{A},\,\mathbf{\Omega}\succeq\mathbf{0}}{\mathrm{maximize}} & \,\,\sum_{k=1}^{N_{M}}w_{k}f_{k}\left(\mathbf{A},\mathbf{\Omega}\right)\label{eq:original problem objective}\\
\mathrm{s.t.}\,\,\,\,\,\,\,\,\, & g_{\mathcal{S}}\left(\mathbf{A},\mathbf{\Omega}\right)\leq\sum_{i\in\mathcal{S}}C_{i},\,\,\mathrm{for\,\, all}\,\,\mathcal{S}\subseteq\mathcal{N_{B}},\label{eq:original problem backhaul}\\
 & \mathrm{tr}\left(\mathbf{E}_{i}^{\dagger}\mathbf{A}\mathbf{A}^{\dagger}\mathbf{E}_{i}+\mathbf{\Omega}_{i,i}\right)\leq P_{i},\,\,\mathrm{for\,\, all}\,\, i\in\mathcal{N_{B}}.\label{eq:original problem power}
\end{align}
\end{subequations}The condition (\ref{eq:original problem backhaul})
corresponds to the backhaul constraints due to multivariate compression
as introduced in Lemma \ref{lem:multivariate}, and the condition
(\ref{eq:original problem power}) imposes the transmit power constraints
(\ref{eq:PerBS power constraint}). It is noted that the problem (\ref{eq:original problem})
is not easy to solve due to the non-convexity of the objective function
$\sum_{k=1}^{N_{M}}w_{k}f_{k}\left(\mathbf{A},\mathbf{\Omega}\right)$
in (\ref{eq:original problem objective}) and the functions $g_{\mathcal{S}}\left(\mathbf{A},\mathbf{\Omega}\right)$
in (\ref{eq:original problem backhaul}) with respect to $(\mathbf{A},\mathbf{\Omega})$.
In Sec. \ref{sec:Joint}, we will propose an algorithm to tackle the
solution of problem (\ref{eq:original problem}).

\subsection{Successive Estimation-Compression Architecture\label{sub:Successive-Compression}}

\begin{figure}
\centering\includegraphics[width=16.5cm,height=9.2cm]{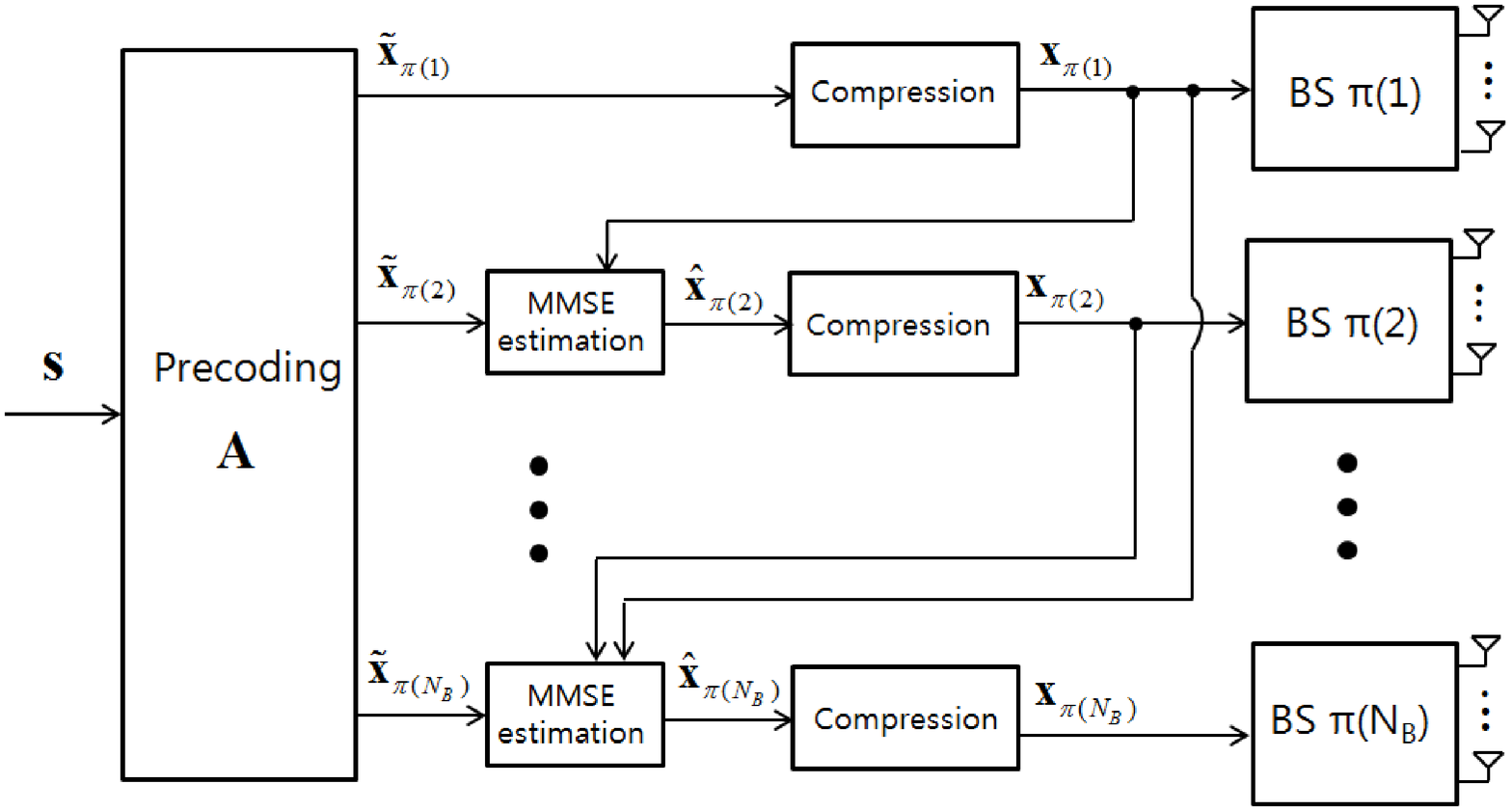}

\caption{\label{fig:successive comp}Proposed architecture for multivariate
compression based on successive steps of MMSE estimation and per-BS
compression.}
\end{figure}

In order to obtain correlated quantization noises across BSs using
multivariate compression, it is in principle necessary to perform
joint compression of all the precoded signals $\tilde{\mathbf{x}}_{i}$
corresponding to all BSs $i$ for $i\in\mathcal{N_{B}}$ (see Sec.
\ref{sub:Multivariate-Compression Theory}). If the number of BSs
is large, this may easily prove to be impractical. Here, we argue
that, in practice, joint compression is not necessary and that the
successive strategy based on MMSE estimation and per-BS compression
illustrated in Fig. \ref{fig:successive comp} is sufficient. The
proposed approach works with a fixed permutation $\pi:\mathcal{N_{B}}\rightarrow\mathcal{N_{B}}$
of the BSs' indices $\mathcal{N_{B}}$.

The central encoder first compresses the signal $\tilde{\mathbf{x}}_{\pi(1)}$
using the test channel (\ref{eq:Gaussian test channel each BS}),
namely $\mathbf{x}_{\pi(1)}=\tilde{\mathbf{x}}_{\pi(1)}+\mathbf{q}_{\pi(1)}$,
with $\mathbf{q}_{\pi(1)}\sim\mathcal{CN}(\mathbf{0},\mathbf{\Omega}_{\pi(1),\pi(1)})$,
and sends the bit stream describing the compressed signal $\mathbf{x}_{\pi(1)}$
over the backhaul link to BS $\pi(1)$. Then, for any other $i\in\mathcal{N_{B}}$
with $i>1$, the central encoder obtains the compressed signal $\mathbf{x}_{\pi(i)}$
for BS $\pi(i)$ in a successive manner in the given order $\pi$
by performing the following steps:

\textbf{(}\textbf{\textit{a}}\textbf{)} \textbf{\textit{Estimation}}\textbf{:}
The central encoder obtains the MMSE estimate $\mathbf{\hat{x}}_{\pi(i)}$
of $\mathbf{x}_{\pi(i)}$ given the signal $\tilde{\mathbf{x}}_{\pi(i)}$
and the previously obtained compressed signals $\mathbf{x}_{\pi(1)},\ldots,\mathbf{x}_{\pi(i-1)}$.
This estimate is given by
\begin{align}
\hat{\mathbf{x}}_{\pi(i)} & =\mathbb{E}\left[\mathbf{x}_{\pi(i)}|\mathbf{u}_{\pi(i)}\right]\label{eq:MMSE estimate x_hat_i}\\
 & =\mathbf{\Sigma}_{\mathbf{x}_{\pi(i)},\mathbf{u}_{\pi(i)}}\mathbf{\Sigma}_{\mathbf{u}_{\pi(i)}}^{-1}\mathbf{u}_{\pi(i)},\nonumber
\end{align}
where we defined the vector $\mathbf{u}_{\pi(i)}=[\mathbf{x}_{\pi(1)}^{\dagger},\ldots,\mathbf{x}_{\pi(i-1)}^{\dagger},\tilde{\mathbf{x}}_{\pi(i)}^{\dagger}]^{\dagger}$,
and the correlation matrices $\mathbf{\Sigma}_{\mathbf{x}_{\pi(i)},\mathbf{u}_{\pi(i)}}$
and $\mathbf{\Sigma}_{\mathbf{u}_{\pi(i)}}$ are given as
\begin{align}
\mathbf{\Sigma}_{\mathbf{x}_{\pi(i)},\mathbf{u}_{\pi(i)}} & =\left[\left(\mathbf{E}_{\pi(i)}^{\dagger}\mathbf{A}\mathbf{A}^{\dagger}\mathbf{E}_{\mathcal{S}_{\pi,i-1}}+\mathbf{\Omega}_{\pi(i),\mathcal{S}_{\pi,i-1}}\right)\,\,\mathbf{E}_{\pi(i)}^{\dagger}\mathbf{A}\mathbf{A}^{\dagger}\mathbf{E}_{\pi(i)}\right]
\end{align}
and
\begin{align}
\mathbf{\Sigma}_{\mathbf{u}_{\pi(i)}} & =\left[\begin{array}{cc}
\mathbf{E}_{\mathcal{S}_{\pi,i-1}}^{\dagger}\mathbf{A}\mathbf{A}^{\dagger}\mathbf{E}_{\mathcal{S}_{\pi,i-1}}+\mathbf{\Omega}_{\mathcal{S}_{\pi,i-1},\mathcal{S}_{\pi,i-1}} & \mathbf{E}_{\mathcal{S}_{\pi,i-1}}^{\dagger}\mathbf{A}\mathbf{A}^{\dagger}\mathbf{E}_{\pi(i)}\\
\mathbf{E}_{\pi(i)}^{\dagger}\mathbf{A}\mathbf{A}^{\dagger}\mathbf{E}_{\mathcal{S}_{\pi,i-1}} & \mathbf{E}_{\pi(i)}^{\dagger}\mathbf{A}\mathbf{A}^{\dagger}\mathbf{E}_{\pi(i)}
\end{array}\right],
\end{align}
with $\mathbf{\Omega}_{\mathcal{S},\mathcal{T}}=\mathbf{E}_{\mathcal{S}}^{\dagger}\mathbf{\Omega}\mathbf{E}_{\mathcal{T}}$
for subsets $\mathcal{S},\mathcal{T}\subseteq\mathcal{N_{B}}$ and
the set $\mathcal{S}_{\pi,i}$ defined as $\mathcal{S}_{\pi,i}\triangleq\{\pi(1),\ldots,\pi(i)\}$.

\textbf{(}\textbf{\textit{b}}\textbf{)} \textbf{\textit{Compression}}\textbf{:}
The central encoder compresses the MMSE estimate $\hat{\mathbf{x}}_{\pi(i)}$
to obtain $\mathbf{x}_{\pi(i)}$ using the test channel
\begin{equation}
\mathbf{x}_{\pi(i)}=\hat{\mathbf{x}}_{\pi(i)}+\hat{\mathbf{q}}_{\pi(i)},\label{eq:compression after MMSE}
\end{equation}
where the quantization noise $\hat{\mathbf{q}}_{\pi(i)}$ is independent
of the estimate $\hat{\mathbf{x}}_{\pi(i)}$ and distributed as $\hat{\mathbf{q}}_{\pi(i)}\sim\mathcal{CN}(\mathbf{0},\mathbf{\Sigma}_{\mathbf{x}_{\pi(i)}|\hat{\mathbf{x}}_{\pi(i)}})$
with
\begin{align}
\mathbf{\Sigma}_{\mathbf{x}_{\pi(i)}|\hat{\mathbf{x}}_{\pi(i)}} & =\mathbf{\Sigma}_{\mathbf{x}_{\pi(i)}|\mathbf{u}_{\pi(i)}}\label{eq:conditional covariance}\\
 & =\mathbf{\Omega}_{\pi(i),\pi(i)}-\mathbf{\Omega}_{\pi(i),\mathcal{S}_{\pi,i-1}}\mathbf{\Omega}_{\mathcal{S}_{\pi,i-1},\mathcal{S}_{\pi,i-1}}^{-1}\mathbf{\Omega}_{\pi(i),\mathcal{S}_{\pi,i-1}}^{\dagger}.\nonumber
\end{align}
Note that the first equality in (\ref{eq:conditional covariance})
follows from the fact that the MMSE estimate $\hat{\mathbf{x}}_{\pi(i)}$
is a sufficient statistic for the estimation of $\mathbf{x}_{\pi(i)}$
from $\mathbf{u}_{\pi(i)}$ (see, e.g., \cite{Forney}). Moreover,
the compression rate $I(\hat{\mathbf{x}}_{\pi(i)};\mathbf{x}_{\pi(i)})$
required by the test channel (\ref{eq:compression after MMSE}) is
given by
\begin{align}
I(\mathbf{x}_{\pi(i)};\hat{\mathbf{x}}_{\pi(i)})= & h\left(\mathbf{x}_{\pi(i)}\right)-h\left(\mathbf{x}_{\pi(i)}|\hat{\mathbf{x}}_{\pi(i)}\right)\label{eq:compression rate MMSE compression}\\
= & \log\det\left(\mathbf{E}_{\pi(i)}^{\dagger}\mathbf{A}\mathbf{A}^{\dagger}\mathbf{E}_{\pi(i)}+\mathbf{\Omega}_{\pi(i),\pi(i)}\right)\nonumber \\
 & -\log\det\left(\mathbf{\Omega}_{\pi(i),\pi(i)}-\mathbf{\Omega}_{\pi(i),\mathcal{S}_{\pi,i-1}}\mathbf{\Omega}_{\mathcal{S}_{\pi,i-1},\mathcal{S}_{\pi,i-1}}^{-1}\mathbf{\Omega}_{\pi(i),\mathcal{S}_{\pi,i-1}}^{\dagger}\right).\nonumber
\end{align}

To see why the structure in Fig. \ref{fig:successive comp} described
above realizes multivariate compression, we need the following lemma.

\begin{lemma}\label{lem:contra polymatroid} The region of the backhaul
capacity tuples $(C_{1},\ldots,C_{N_{B}})$ satisfying the constraints
(\ref{eq:original problem backhaul}) is a \textit{contrapolymatroid}%
\footnote{Let us define $\mathcal{M}=\{1,\ldots,M\}$ and $f:\tilde{\mathcal{M}}\rightarrow\mathbb{R}_{+}$
with $\tilde{\mathcal{M}}$ being the set of all subsets of $\mathcal{M}$.
Then, the polyhedron $\{(x_{1},\ldots,x_{M})|\sum_{i\in\mathcal{S}}x_{i}\geq f(\mathcal{S}),\,\,\mathrm{for\,\, all}\,\,\mathcal{S}\subseteq\mathcal{M}\}$
is a contrapolymatroid if the function $f$ satisfies the conditions:
(\textit{a}) $f(\emptyset)=0$; (\textit{b}) $f(\mathcal{S})\leq f(\mathcal{T})$
if $\mathcal{S}\subset\mathcal{T}$; (\textit{c}) $f(\mathcal{S})+f(\mathcal{T})\leq f(\mathcal{S}\cup\mathcal{T})+f(\mathcal{S}\cap\mathcal{T})$
\cite[Def. 3.1]{Tse}.%
} \cite[Def. 3.1]{Tse}. Therefore, it has a corner point for each
permutation $\pi$ of the BS indices $\mathcal{N_{B}}$, and each
such corner point is given by the tuple $(C_{\pi(1)},\ldots,C_{\pi(N_{B})})$
with
\begin{align}
C_{\pi(i)}= & I\left(\mathbf{x}_{\pi(i)};\tilde{\mathbf{x}},\mathbf{x}_{\pi(1)},\ldots,\mathbf{x}_{\pi(i-1)}\right)\label{eq:corner points}\\
= & I(\mathbf{x}_{\pi(i)};\hat{\mathbf{x}}_{\pi(i)})\nonumber
\end{align}
for $i=1,\ldots,N_{B}$. Moreover, the corner point $(C_{\pi(1)},\ldots,C_{\pi(N_{B})})$
in (\ref{eq:corner points}) is such that the constraints (\ref{eq:original problem backhaul})
are satisfied with equality for the subsets $\mathcal{S}=\{\pi(1)\},\{\pi(1),\pi(2)\},\ldots,$
$\{\pi(1),\ldots,\pi(N_{B})\}$.

\end{lemma}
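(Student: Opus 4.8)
The plan is to treat the set function $f(\mathcal{S})\triangleq g_{\mathcal{S}}(\mathbf{A},\mathbf{\Omega})$ defined in (\ref{eq:multivariate computed}) as the rank function of the claimed contrapolymatroid and to verify the three defining properties (a)--(c) listed in the footnote to the lemma; the corner-point statements then follow from the standard greedy characterization of contrapolymatroids in \cite[Def. 3.1]{Tse} together with an information-theoretic identification of the resulting vertex coordinates.

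First I would decompose $f(\mathcal{S}) = a(\mathcal{S}) - b(\mathcal{S})$, where $a(\mathcal{S}) = \sum_{i\in\mathcal{S}} h(\mathbf{x}_i)$ is modular and $b(\mathcal{S}) = h(\mathbf{x}_{\mathcal{S}}|\tilde{\mathbf{x}})$. Property (a), $f(\emptyset)=0$, is immediate since both terms vanish. For monotonicity (b), for $\mathcal{S}\subset\mathcal{T}$ the chain rule gives $f(\mathcal{T})-f(\mathcal{S}) = \sum_{i\in\mathcal{T}\setminus\mathcal{S}} h(\mathbf{x}_i) - h(\mathbf{x}_{\mathcal{T}\setminus\mathcal{S}}|\mathbf{x}_{\mathcal{S}},\tilde{\mathbf{x}})$, which is nonnegative because conditioning reduces entropy and entropy is subadditive. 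For supermodularity (c), the modularity of $a$ makes $a(\mathcal{S})+a(\mathcal{T})=a(\mathcal{S}\cup\mathcal{T})+a(\mathcal{S}\cap\mathcal{T})$ cancel exactly, so the required inequality reduces to the submodularity of the conditional entropy, $b(\mathcal{S})+b(\mathcal{T})\ge b(\mathcal{S}\cup\mathcal{T})+b(\mathcal{S}\cap\mathcal{T})$, which is equivalent to the nonnegativity of $I(\mathbf{x}_{\mathcal{S}\setminus\mathcal{T}};\mathbf{x}_{\mathcal{T}\setminus\mathcal{S}}|\mathbf{x}_{\mathcal{S}\cap\mathcal{T}},\tilde{\mathbf{x}})$. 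This establishes that the region (\ref{eq:original problem backhaul}) is a contrapolymatroid and, invoking \cite{Tse}, that each permutation $\pi$ yields a corner point with coordinates $C_{\pi(i)} = f(\mathcal{S}_{\pi,i}) - f(\mathcal{S}_{\pi,i-1})$.

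Next I would identify these coordinates with the expressions in (\ref{eq:corner points}). Applying the chain-rule computation of step one with $\mathcal{T}=\mathcal{S}_{\pi,i}$ and $\mathcal{S}=\mathcal{S}_{\pi,i-1}$, so that $\mathcal{T}\setminus\mathcal{S}=\{\pi(i)\}$, gives $C_{\pi(i)} = h(\mathbf{x}_{\pi(i)}) - h(\mathbf{x}_{\pi(i)}|\mathbf{x}_{\mathcal{S}_{\pi,i-1}},\tilde{\mathbf{x}}) = I(\mathbf{x}_{\pi(i)};\tilde{\mathbf{x}},\mathbf{x}_{\pi(1)},\ldots,\mathbf{x}_{\pi(i-1)})$, which is the first line of (\ref{eq:corner points}). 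The final claim of the lemma then follows by telescoping the increments: summing over the nested subsets collapses to $\sum_{j\le i} C_{\pi(j)} = f(\mathcal{S}_{\pi,i})$, so the backhaul constraint (\ref{eq:original problem backhaul}) is met with equality on each set $\mathcal{S}_{\pi,i}$.

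The step I expect to be the crux is the second equality in (\ref{eq:corner points}), $I(\mathbf{x}_{\pi(i)};\tilde{\mathbf{x}},\mathbf{x}_{\mathcal{S}_{\pi,i-1}}) = I(\mathbf{x}_{\pi(i)};\hat{\mathbf{x}}_{\pi(i)})$, because the left side conditions on the \emph{entire} precoded vector $\tilde{\mathbf{x}}$ whereas the successive estimator (\ref{eq:MMSE estimate x_hat_i}) has access only to $\tilde{\mathbf{x}}_{\pi(i)}$ and the previously compressed signals. The resolution exploits the model (\ref{eq:whole encoding operation}): since $\mathbf{x}_{\pi(i)}=\tilde{\mathbf{x}}_{\pi(i)}+\mathbf{q}_{\pi(i)}$ and the quantization noise $\mathbf{q}$ is independent of $\tilde{\mathbf{x}}=\mathbf{A}\mathbf{s}$, conditioning on $(\tilde{\mathbf{x}},\mathbf{x}_{\mathcal{S}_{\pi,i-1}})$ is equivalent to conditioning on $(\tilde{\mathbf{x}},\mathbf{q}_{\mathcal{S}_{\pi,i-1}})$ and collapses to $h(\mathbf{x}_{\pi(i)}|\tilde{\mathbf{x}},\mathbf{x}_{\mathcal{S}_{\pi,i-1}}) = h(\mathbf{q}_{\pi(i)}|\mathbf{q}_{\mathcal{S}_{\pi,i-1}})$. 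Hence $I(\mathbf{x}_{\pi(i)};\tilde{\mathbf{x}},\mathbf{x}_{\mathcal{S}_{\pi,i-1}}) = h(\mathbf{x}_{\pi(i)}) - h(\mathbf{q}_{\pi(i)}|\mathbf{q}_{\mathcal{S}_{\pi,i-1}})$; evaluating both Gaussian entropies yields precisely the log-determinant expression (\ref{eq:compression rate MMSE compression}), whose second term is the Schur complement (\ref{eq:conditional covariance}) equal to $\mathrm{cov}(\mathbf{q}_{\pi(i)}|\mathbf{q}_{\mathcal{S}_{\pi,i-1}})$, and which by (\ref{eq:compression rate MMSE compression}) is exactly $I(\mathbf{x}_{\pi(i)};\hat{\mathbf{x}}_{\pi(i)})$. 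I would take care that the dimension-matched $\log(\pi e)$ constants cancel between $h(\mathbf{x}_{\pi(i)})$ and the conditional entropy, so that the identity holds exactly rather than up to an additive constant.
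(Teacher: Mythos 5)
Your proof is correct, and it diverges from the paper's own proof mainly in how much it makes explicit: the paper disposes of the lemma in two sentences, attributing the contrapolymatroid structure and its corner points directly to \cite[Def. 3.1]{Tse}, and justifying the second equality in (\ref{eq:corner points}) by the sufficient-statistic property of the MMSE estimate, i.e., the Markov chain $\mathbf{x}_{\pi(i)}-\hat{\mathbf{x}}_{\pi(i)}-\mathbf{u}_{\pi(i)}$. You instead (i) verify the axioms (a)--(c) by hand, splitting $g_{\mathcal{S}}$ into the modular part $\sum_{i\in\mathcal{S}}h(\mathbf{x}_{i})$ and the submodular part $h(\mathbf{x}_{\mathcal{S}}|\tilde{\mathbf{x}})$, with supermodularity reduced to $I\bigl(\mathbf{x}_{\mathcal{S}\setminus\mathcal{T}};\mathbf{x}_{\mathcal{T}\setminus\mathcal{S}}|\mathbf{x}_{\mathcal{S}\cap\mathcal{T}},\tilde{\mathbf{x}}\bigr)\geq0$ --- this is exactly the content hidden behind the paper's citation, and your chain-rule computations are right; and (ii) replace the sufficient-statistic argument by a direct Gaussian calculation, using independence of $\mathbf{q}$ and $\tilde{\mathbf{x}}$ to show $h(\mathbf{x}_{\pi(i)}|\tilde{\mathbf{x}},\mathbf{x}_{\mathcal{S}_{\pi,i-1}})=h(\mathbf{q}_{\pi(i)}|\mathbf{q}_{\mathcal{S}_{\pi,i-1}})$ and then matching log-determinants against (\ref{eq:compression rate MMSE compression}). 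Your route buys something real here: the paper's sufficiency argument by itself only identifies $I(\mathbf{x}_{\pi(i)};\hat{\mathbf{x}}_{\pi(i)})$ with $I(\mathbf{x}_{\pi(i)};\mathbf{u}_{\pi(i)})$, while the first line of (\ref{eq:corner points}) conditions on the \emph{entire} vector $\tilde{\mathbf{x}}$ rather than on the single component $\tilde{\mathbf{x}}_{\pi(i)}$ entering $\mathbf{u}_{\pi(i)}$; your translation-to-$\mathbf{q}$ step is precisely the bridge the paper leaves implicit. Note only that your last step, invoking (\ref{eq:compression rate MMSE compression}), inherits the paper's identity (\ref{eq:conditional covariance}), whose Schur-complement form is valid because the central encoder can recover $\mathbf{q}_{\mathcal{S}_{\pi,i-1}}$ exactly from $\tilde{\mathbf{x}}_{\mathcal{S}_{\pi,i-1}}$ and $\mathbf{x}_{\mathcal{S}_{\pi,i-1}}$ (conditioning merely on $\tilde{\mathbf{x}}_{\pi(i)}$ and $\mathbf{x}_{\mathcal{S}_{\pi,i-1}}$ would in general give a strictly larger conditional covariance); so your proof and the paper's ultimately rest on the same fact, packaged computationally in your case and information-theoretically in theirs.
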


\begin{proof} This lemma follows immediately by the definition and
properties of contrapolymatroids as summarized in \cite[Def. 3.1]{Tse}.
Moreover, the second equality of (\ref{eq:corner points}) holds due
to the fact that the MMSE estimate $\hat{\mathbf{x}}_{\pi(i)}$ is
a sufficient statistic for the estimation of $\mathbf{x}_{\pi(i)}$
from $\mathbf{u}_{\pi(i)}$ (see, e.g., \cite{Forney}), or equivalently
from the Markov chain $\mathbf{x}_{\pi(i)}-\hat{\mathbf{x}}_{\pi(i)}-\mathbf{u}_{\pi(i)}$.\end{proof}

\begin{figure}
\centering\includegraphics[width=12cm,height=9.2cm]{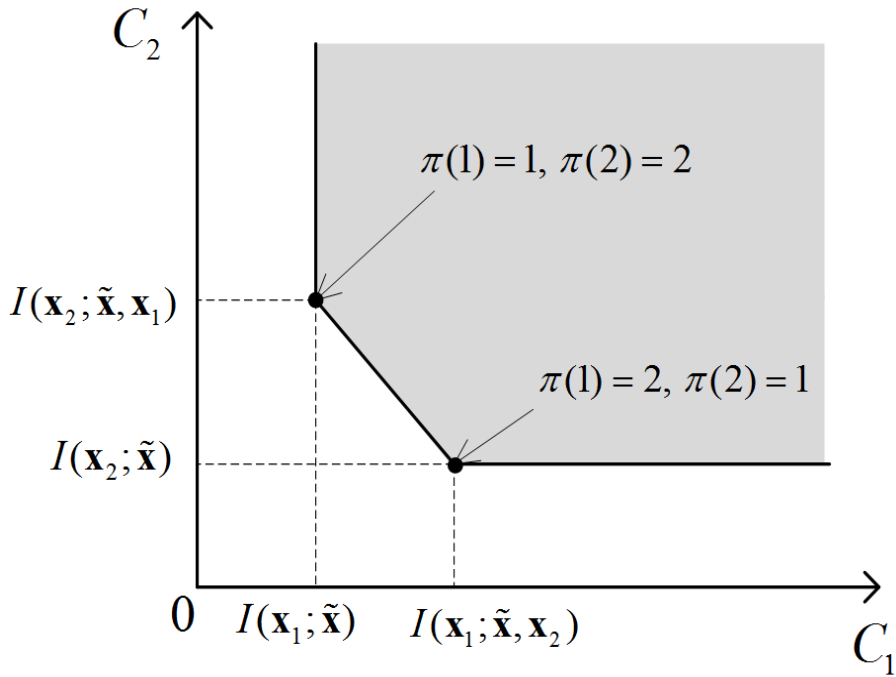}

\caption{\label{fig:corner points}Illustrative example of the contrapolymatroidal
region of backhaul capacities $(C_{1},C_{2})$ satisfying the constraint
(\ref{eq:original problem backhaul}) imposed by multivariate compression
for $N_{B}=2$. The two corner points are given by (\ref{eq:corner points}).}
\end{figure}

Lemma \ref{lem:contra polymatroid} shows that the region of backhaul
capacities that guarantees correct delivery of the compressed signals
(\ref{eq:Gaussian test channel each BS}) to the BSs, as identified
in Lemma \ref{lem:multivariate}, is a type of polyhedron known as
contrapolymatroid, as exemplified in Fig. \ref{fig:corner points}.
A specific feature of contrapolymatroid is that the corner points
can be easily characterized as in (\ref{eq:corner points}). From
standard rate-distortion theory arguments, the equality between (\ref{eq:compression rate MMSE compression})
and (\ref{eq:corner points}) implies that the corner point $(C_{\pi(1)},\ldots,C_{\pi(N_{B})})$
can be obtained for any permutation $\pi$ by the successive estimation-compression
strategy outlined above and illustrated in Fig. \ref{fig:successive comp}.

Overall, the discussion above shows that, for any correlation matrix
$\mathbf{\Omega}$ in (\ref{eq:compression covariance}), multivariate
compression is feasible by using the successive estimation-compression
architecture in Fig. \ref{fig:successive comp} if the backhaul capacities
satisfy the corner point condition (\ref{eq:corner points}) for the
given BS order $\pi$. Note that, in general, conditions (\ref{eq:corner points})
are more restrictive than (\ref{eq:original problem backhaul}), which
allows for any backhaul capacities in the contrapolymatroid. This
is because the solution to the optimization problem (\ref{eq:original problem})
can be seen by contradiction to lie necessarily on the boundary of
the region (\ref{eq:original problem backhaul}) but possibly not
on the corner points. Further discussion on this point can be found
in Sec. \ref{sub:Practical-Implementation}, where we observe that,
in practice, this limitation is not critical.

\section{Joint Design of Precoding and Compression\label{sec:Joint}}

In this section, we aim at jointly optimizing the precoding matrix
$\mathbf{A}$ and the compression covariance $\mathbf{\Omega}$ by
solving problem (\ref{eq:original problem}). In Sec. \ref{sec:Separate},
we will then consider the generally suboptimal strategy in which the
precoding matrix is fixed according to standard techniques, such as
ZF \cite{RZhang}, MMSE \cite{Luo:JSAC} or weighted sum-rate maximizing
precoding \cite{Huang}\cite{Luo} by neglecting the compression noise,
and only the compression noise matrix $\mathbf{\Omega}$ is optimized.

\subsection{MM Algorithm\label{sub:MM-Algorithm}}

As mentioned, the optimization (\ref{eq:original problem}) is a non-convex
problem. To tackle this issue, we will now show how to obtain an efficient
scheme that is able to achieve a stationary point of problem (\ref{eq:original problem}).
To this end, we first make a change of variable by defining the variables
$\mathbf{R}_{k}\triangleq\mathbf{A}_{k}\mathbf{A}_{k}^{\dagger}$
for $k\in\mathcal{N_{M}}$. Then, we define the functions $f_{k}\left(\{\mathbf{R}_{j}\}_{j=1}^{N_{M}},\mathbf{\Omega}\right)$
and $g_{\mathcal{S}}\left(\{\mathbf{R}_{k}\}_{k=1}^{N_{M}},\mathbf{\Omega}\right)$
with respect to the variables $\{\mathbf{R}_{k}\}_{k=1}^{N_{M}}$
which are obtained by substituting $\mathbf{R}_{k}=\mathbf{A}_{k}\mathbf{A}_{k}^{\dagger}$
into the functions $f_{k}\left(\mathbf{A},\mathbf{\Omega}\right)$
and $g_{\mathcal{S}}\left(\mathbf{A},\mathbf{\Omega}\right)$ in problem
(\ref{eq:original problem}), respectively, and the transmit power
constraint as $\mathrm{tr}\left(\sum_{k=1}^{N_{M}}\mathbf{E}_{i}^{\dagger}\mathbf{R}_{k}\mathbf{E}_{i}+\mathbf{\Omega}_{i,i}\right)\leq P_{i}$
for $i\in\mathcal{N_{B}}$. The so-obtained problem over the variables
$\{\mathbf{R}_{k}\}_{k=1}^{N_{M}}$ and $\mathbf{\Omega}$ is still
non-convex due to the second term in $f_{k}\left(\{\mathbf{R}_{j}\}_{j=1}^{N_{M}},\mathbf{\Omega}\right)$
and the first term in $g_{\mathcal{S}}\left(\{\mathbf{R}_{k}\}_{k=1}^{N_{M}},\mathbf{\Omega}\right)$,
which are concave in the variables $\{\mathbf{R}_{k}\}_{k=1}^{N_{M}}$
and $\mathbf{\Omega}$. However, we observe that it falls into the
class of difference-of-convex (DC) problems \cite{Beck}. Among various
algorithms having desirable properties for the solution of DC problems
\cite{Beck}, we adopt the Majorization Minimization (MM) scheme \cite{Beck},
which solves a sequence of convex problems obtained by linearizing
non-convex parts in the objective function $f_{k}\left(\{\mathbf{R}_{j}\}_{j=1}^{N_{M}},\mathbf{\Omega}\right)$
and the constraint function $g_{\mathcal{S}}\left(\{\mathbf{R}_{k}\}_{k=1}^{N_{M}},\mathbf{\Omega}\right)$.
It is shown that the MM algorithm converges to a stationary point
of the original non-convex problems (see, e.g., \cite[Theorem 1]{Luo},
\cite[Sec. 1.3.3]{Beck} and \cite[Theorem 3]{Scutari}). The proposed
algorithm is summarized in Table Algorithm 1 where we define the functions
$f_{k}^{\prime}(\{\mathbf{R}_{j}^{(t+1)}\}_{j=1}^{N_{M}},\mathbf{\Omega}^{(t+1)},\{\mathbf{R}_{j}^{(t)}\}_{j=1}^{N_{M}},\mathbf{\Omega}^{(t)})$
and $g_{\mathcal{S}}^{\prime}(\{\mathbf{R}_{j}^{(t+1)}\}_{j=1}^{N_{M}},\mathbf{\Omega}^{(t+1)},\{\mathbf{R}_{j}^{(t)}\}_{j=1}^{N_{M}},\mathbf{\Omega}^{(t)})$
as
\begin{align}
 & f_{k}^{\prime}\left(\{\mathbf{R}_{j}^{(t+1)}\}_{j=1}^{N_{M}},\mathbf{\Omega}^{(t+1)},\{\mathbf{R}_{j}^{(t)}\}_{j=1}^{N_{M}},\mathbf{\Omega}^{(t)}\right)\\
\triangleq & \log\det\left(\mathbf{I}+\mathbf{H}_{k}\left(\sum_{j=1}^{N_{M}}\mathbf{R}_{j}^{(t+1)}+\mathbf{\Omega}^{(t+1)}\right)\mathbf{H}_{k}^{\dagger}\right)\nonumber \\
- & \varphi\left(\mathbf{I}+\mathbf{H}_{k}\left(\sum_{j=1,j\neq k}^{N_{M}}\mathbf{R}_{j}^{(t+1)}+\mathbf{\Omega}^{(t+1)}\right)\mathbf{H}_{k}^{\dagger},\,\,\mathbf{I}+\mathbf{H}_{k}\left(\sum_{j=1,j\neq k}^{N_{M}}\mathbf{R}_{j}^{(t)}+\mathbf{\Omega}^{(t)}\right)\mathbf{H}_{k}^{\dagger}\right)\nonumber
\end{align}
and
\begin{align}
 & g_{\mathcal{S}}^{\prime}\left(\{\mathbf{R}_{j}^{(t+1)}\}_{j=1}^{N_{M}},\mathbf{\Omega}^{(t+1)},\{\mathbf{R}_{j}^{(t)}\}_{j=1}^{N_{M}},\mathbf{\Omega}^{(t)}\right)\\
\triangleq & \varphi\left(\sum_{j=1}^{N_{M}}\mathbf{E}_{i}^{\dagger}\mathbf{R}_{j}^{(t+1)}\mathbf{E}_{i}+\mathbf{\Omega}_{i,i}^{(t+1)},\,\,\sum_{j=1}^{N_{M}}\mathbf{E}_{i}^{\dagger}\mathbf{R}_{j}^{(t)}\mathbf{E}_{i}+\mathbf{\Omega}_{i,i}^{(t)}\right)\nonumber \\
- & \log\det\left(\mathbf{E}_{\mathcal{S}}^{\dagger}\mathbf{\Omega}^{(t+1)}\mathbf{E}_{\mathcal{S}}\right)\nonumber
\end{align}
with the function $\varphi(\mathbf{X},\mathbf{Y})$ given as
\begin{equation}
\varphi(\mathbf{X},\mathbf{Y})\triangleq\log\det\left(\mathbf{Y}\right)+\frac{1}{\ln2}\mathrm{tr}\left(\mathbf{Y}^{-1}\left(\mathbf{X}-\mathbf{Y}\right)\right).
\end{equation}

\begin{algorithm}
\caption{MM Algorithm for problem (\ref{eq:original problem})}

1. Initialize the matrices $\{\mathbf{R}_{k}^{(1)}\}_{k=1}^{N_{M}}$
and $\mathbf{\Omega}^{(1)}$ to arbitrary feasible positive semidefinite
matrices for problem (\ref{eq:original problem}) and set $t=1$.

2. Update the matrices $\{\mathbf{R}_{k}^{(t+1)}\}_{k=1}^{N_{M}}$
and $\mathbf{\Omega}^{(t+1)}$ as a solution of the following (convex)
problem.
\begin{align}
\underset{\{\mathbf{R}_{k}^{(t+1)}\succeq\mathbf{0}\}_{k=1}^{N_{M}},\mathbf{\Omega}^{(t+1)}\succeq\mathbf{0}}{\mathrm{maximize}} & \sum_{k=1}^{N_{M}}w_{k}f_{k}^{\prime}\left(\{\mathbf{R}_{j}^{(t+1)}\}_{j=1}^{N_{M}},\mathbf{\Omega}^{(t+1)},\{\mathbf{R}_{j}^{(t)}\}_{j=1}^{N_{M}},\mathbf{\Omega}^{(t)}\right)\label{eq:convex approximated problem JDD}\\
\mathrm{s.t.}\,\,\,\,\,\,\,\,\, & g_{\mathcal{S}}^{\prime}\left(\{\mathbf{R}_{j}^{(t+1)}\}_{j=1}^{N_{M}},\mathbf{\Omega}^{(t+1)},\{\mathbf{R}_{j}^{(t)}\}_{j=1}^{N_{M}},\mathbf{\Omega}^{(t)}\right)\leq\sum_{i\in\mathcal{S}}C_{i},\,\,\mathrm{for\, all}\,\mathcal{S}\subseteq\mathcal{N_{B}},\nonumber \\
 & \mathrm{tr}\left(\sum_{k=1}^{N_{M}}\mathbf{E}_{i}^{\dagger}\mathbf{R}_{k}^{(t+1)}\mathbf{E}_{i}+\mathbf{\Omega}_{i,i}^{(t+1)}\right)\leq P_{i},\,\,\mathrm{for\, all}\, i\in\mathcal{N_{B}}.\nonumber
\end{align}

3. Go to Step 4 if a convergence criterion is satisfied. Otherwise,
set $t\leftarrow t+1$ and go back to Step 2.

4. Calculate the precoding matrices $\mathbf{A}_{k}\leftarrow\mathbf{V}_{k}\mathbf{D}_{k}^{1/2}$
for $k\in\mathcal{N_{M}}$, where $\mathbf{D}_{k}$ is a diagonal
matrix whose diagonal elements are the nonzero eigenvalues of $\mathbf{R}_{k}^{(t)}$
and the columns of $\mathbf{V}_{k}$ are the corresponding eigenvectors.
\end{algorithm}

\subsection{Practical Implementation\label{sub:Practical-Implementation}}

As we have discussed in Sec. \ref{sec:Problem formulation}, given
the solution $(\mathbf{A},\mathbf{\Omega})$ obtained from the proposed
algorithm, the central processor should generally perform joint compression
in order to obtain the signals $\mathbf{x}_{i}$ to be transmitted
by the BSs. However, as seen in Sec. \ref{sub:Successive-Compression},
if the solution is such that the corner point conditions (\ref{eq:corner points})
are satisfied for a given permutation $\pi$ of the BSs' indices,
then the simpler successive estimation-compression structure of Fig.
\ref{fig:successive comp} can be leveraged instead. We recall, from
Lemma \ref{lem:contra polymatroid}, that in order to check whether
the conditions (\ref{eq:corner points}) are satisfied for some order
$\pi$, it is sufficient to observe which inequalities (\ref{eq:original problem backhaul})
are satisfied with equality: If these inequalities correspond to the
subsets $\mathcal{S}=\{\pi(1)\},\{\pi(1),\pi(2)\},\ldots,\{\pi(1),\ldots,\pi(N_{B})\}$
for a given permutation $\pi$, then the given solution corresponds
to the corner point (\ref{eq:corner points}) with the given $\pi$.
In our extensive numerical results, we have consistently found this
condition to be verified. As a result, in practice, one can implement
the compression strategy characterized by the calculated covariance
$\mathbf{\Omega}$ by employing the implementation of Fig. \ref{fig:successive comp}
with the obtained ordering $\pi$.

\subsection{Independent Quantization\label{sub:Independent-Quantization}}

For reference, it is useful to consider the weighted sum-rate maximization
problem with independent quantization noises as in \cite{Simeone}-\cite{Hong}.
This is formulated as (\ref{eq:original problem}) with the additional
constraints
\begin{equation}
\mathbf{\Omega}_{i,j}=\mathbf{0},\,\,\mathrm{for\,\, all}\,\, i\neq j\in\mathcal{N_{B}}.\label{eq:independent quantization}
\end{equation}
Since the constraints (\ref{eq:independent quantization}) are affine,
the MM algorithm in Table Algorithm 1 is still applicable by simply
setting to zero matrices $\mathbf{\Omega}_{i,j}=\mathbf{0}$ for $i\neq j$
as per (\ref{eq:independent quantization}).

\subsection{Robust Design with Imperfect CSI\label{sub:Robust-Design-with}}

So far, we have assumed that the central encoder has information about
the global channel matrices $\mathbf{H}_{k}$ for $k\in\mathcal{N_{M}}$.
In this subsection, we discuss the robust design of the precoding
matrix $\mathbf{A}$ and the compression covariance $\mathbf{\Omega}$
in the presence of uncertainty at the central encoder regarding the
channel matrices $\mathbf{H}_{k}$ for $k\in\mathcal{N_{M}}$. Specifically,
we focus on deterministic worst-case optimization under two different
uncertainty models, namely singular value uncertainty \cite{Loyka}
and ellipsoidal uncertainty models (see \cite{Shen}\cite[Sec. 4.1]{Bjornson}
and references therein). While the singular value uncertainty model
can be related via appropriate bounds to any normed uncertainty on
the channel matrices, as discussed in \cite[Sec. V]{Loyka}, the ellipsoidal
uncertainty model is more accurate when knowledge of the covariance
matrix of the CSI error, due, e.g., to estimation, is available \cite[Sec. 4.1]{Bjornson}.
In the following, we briefly discuss both models.

\subsubsection{Singular Value Uncertainty Model}

Considering multiplicative uncertainty model of \cite[Sec. II-A]{Loyka},
the actual channel matrix $\mathbf{H}_{k}$ toward each MS $k$ is
modeled as
\begin{equation}
\mathbf{H}_{k}=\hat{\mathbf{H}}_{k}\left(\mathbf{I}+\mathbf{\mathbf{\Delta}}_{k}\right),\label{eq:multiplicative model}
\end{equation}
where the matrix $\hat{\mathbf{H}}_{k}$ is the CSI known at the central
encoder and the matrix $\mathbf{\mathbf{\Delta}}_{k}\in\mathbb{C}^{n_{B}\times n_{B}}$
accounts for the multiplicative uncertainty matrix. The latter is
bounded as
\begin{equation}
\sigma_{\mathrm{max}}\left(\mathbf{\mathbf{\Delta}}_{k}\right)\leq\epsilon_{k}<1,\label{eq:singular value uncertainty}
\end{equation}
where $\sigma_{\mathrm{max}}(\mathbf{X})$ is the largest singular
value of matrix $\mathbf{X}$. Then, the problem of interest is to
maximizing the worst-case weighted sum-rate over all possible uncertainty
matrices $\mathbf{\mathbf{\Delta}}_{k}$ for $k\in\mathcal{N_{M}}$
subject to the backhaul capacity (\ref{eq:original problem backhaul})
and power constraints (\ref{eq:original problem power}), namely\begin{subequations}\label{eq:worst-case problem}
\begin{align}
\underset{\mathbf{A},\,\mathbf{\Omega}\succeq\mathbf{0}}{\mathrm{maximize}} & \,\,\min_{\left\{ \mathbf{\mathbf{\Delta}}_{k}\,\,\mathrm{s.t.}\,\,(\ref{eq:singular value uncertainty})\right\} {}_{k=1}^{N_{M}}}\sum_{k=1}^{N_{M}}w_{k}f_{k}\left(\mathbf{A},\mathbf{\Omega}\right)\label{eq:worst-case problem objective}\\
\mathrm{s.t.}\,\,\,\,\,\,\,\,\, & g_{\mathcal{S}}\left(\mathbf{A},\mathbf{\Omega}\right)\leq\sum_{i\in\mathcal{S}}C_{i},\,\,\mathrm{for\,\, all}\,\,\mathcal{S}\subseteq\mathcal{N_{B}},\label{eq:worst-case problem backhaul}\\
 & \mathrm{tr}\left(\mathbf{E}_{i}^{\dagger}\mathbf{A}\mathbf{A}^{\dagger}\mathbf{E}_{i}+\mathbf{\Omega}_{i,i}\right)\leq P_{i},\,\,\mathrm{for\,\, all}\,\, i\in\mathcal{N_{B}}.\label{eq:worst-case problem power}
\end{align}
\end{subequations} The following lemma offers an equivalent formulation
for problem (\ref{eq:worst-case problem}).

\begin{lemma}\label{lem:problem singular uncertainty}

The problem (\ref{eq:worst-case problem}) is equivalent to the problem
(\ref{eq:original problem}) with the channel matrix $\mathbf{H}_{k}$
replaced with $(1-\epsilon_{k})\hat{\mathbf{H}}_{k}$ for $k\in\mathcal{N_{M}}$.
\end{lemma}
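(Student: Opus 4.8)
The plan is to exploit the fact that the channel matrices $\{\mathbf{H}_{k}\}$ enter the worst-case problem (\ref{eq:worst-case problem}) only through the objective (\ref{eq:worst-case problem objective}): both the backhaul constraints (\ref{eq:worst-case problem backhaul}) and the power constraints (\ref{eq:worst-case problem power}) are functions of $(\mathbf{A},\mathbf{\Omega})$ alone and are unaffected by the uncertainty matrices $\{\mathbf{\Delta}_{k}\}$. Consequently, the outer maximization and the inner minimization over $\{\mathbf{\Delta}_{k}\}$ interact only in the objective, and it suffices to evaluate the worst-case objective $\min_{\{\mathbf{\Delta}_{k}\}}\sum_{k}w_{k}f_{k}(\mathbf{A},\mathbf{\Omega})$ pointwise in $(\mathbf{A},\mathbf{\Omega})$. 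Since each uncertainty set $\{\mathbf{\Delta}_{k}:\sigma_{\mathrm{max}}(\mathbf{\Delta}_{k})\le\epsilon_{k}\}$ is compact and each $f_{k}$ in (\ref{eq:rate MS k computed}) is continuous, the minimum is attained.

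Next I would decouple the inner minimization across the MS index $k$. By the multiplicative model (\ref{eq:multiplicative model}), the summand $w_{k}f_{k}(\mathbf{A},\mathbf{\Omega})$ depends on the uncertainty only through $\mathbf{\Delta}_{k}$, while the feasible set $\prod_{k}\{\mathbf{\Delta}_{k}:\sigma_{\mathrm{max}}(\mathbf{\Delta}_{k})\le\epsilon_{k}\}$ is a Cartesian product. Because the weights satisfy $w_{k}\ge0$, the minimum of this separable objective over a product set equals the sum of the per-term minima,
\[
\min_{\{\mathbf{\Delta}_{k}\}}\sum_{k=1}^{N_{M}}w_{k}f_{k}\left(\mathbf{A},\mathbf{\Omega}\right)=\sum_{k=1}^{N_{M}}w_{k}\min_{\sigma_{\mathrm{max}}(\mathbf{\Delta}_{k})\le\epsilon_{k}}f_{k}\left(\mathbf{A},\mathbf{\Omega}\right),
\]
so it remains to solve a single-MS worst-case problem for each $k$.

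The core step, and the main obstacle, is to show that for each $k$ the minimizer of $f_{k}$ over the ball $\sigma_{\mathrm{max}}(\mathbf{\Delta}_{k})\le\epsilon_{k}$ is the isotropic contraction $\mathbf{\Delta}_{k}=-\epsilon_{k}\mathbf{I}$, equivalently the effective channel $\mathbf{H}_{k}=(1-\epsilon_{k})\hat{\mathbf{H}}_{k}$. To expose the relevant monotonicity I would set $\mathbf{K}\triangleq\mathbf{A}\mathbf{A}^{\dagger}+\mathbf{\Omega}$ and $\mathbf{K}_{-k}\triangleq\sum_{l\neq k}\mathbf{A}_{l}\mathbf{A}_{l}^{\dagger}+\mathbf{\Omega}$, so that $\mathbf{K}-\mathbf{K}_{-k}=\mathbf{A}_{k}\mathbf{A}_{k}^{\dagger}\succeq\mathbf{0}$, and recast $f_{k}$ in interference-whitened form via Sylvester's identity,
\[
f_{k}=\log\det\left(\mathbf{I}+\mathbf{A}_{k}^{\dagger}\mathbf{H}_{k}^{\dagger}\left(\mathbf{I}+\mathbf{H}_{k}\mathbf{K}_{-k}\mathbf{H}_{k}^{\dagger}\right)^{-1}\mathbf{H}_{k}\mathbf{A}_{k}\right).
\]
The ordering $\mathbf{K}\succeq\mathbf{K}_{-k}$ is precisely what makes $f_{k}$ nondecreasing when the channel gain is scaled up. This is transparent in the scalar specialization, where $f_{k}=\log\frac{1+gp}{1+gp'}$ with $p\ge p'\ge0$ and gain $g=|1+\Delta_{k}|^{2}$, whose derivative $(p-p')/[(1+gp)(1+gp')]$ is nonnegative; hence $f_{k}$ increases with $g$, whose minimum over $|\Delta_{k}|\le\epsilon_{k}$ equals $(1-\epsilon_{k})^{2}$, attained at $\Delta_{k}=-\epsilon_{k}$. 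In the MIMO case the budget bound gives $\sigma_{\mathrm{min}}(\mathbf{I}+\mathbf{\Delta}_{k})\ge1-\sigma_{\mathrm{max}}(\mathbf{\Delta}_{k})\ge1-\epsilon_{k}$, i.e. $(\mathbf{I}+\mathbf{\Delta}_{k})(\mathbf{I}+\mathbf{\Delta}_{k})^{\dagger}\succeq(1-\epsilon_{k})^{2}\mathbf{I}$, so the adversary cannot shrink the channel below the isotropic factor $1-\epsilon_{k}$. The difficulty is that $f_{k}$ is a difference of log-determinants, hence not monotone in $\mathbf{H}_{k}$ in an elementary Loewner sense, and any rotational component of $\mathbf{\Delta}_{k}$ within the ball must be shown not to lower $f_{k}$ below the pure-shrink value; I would complete this extremal argument along the lines of the multiplicative singular-value analysis of \cite{Loyka}, crucially invoking $\mathbf{K}\succeq\mathbf{K}_{-k}$.

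Finally I would assemble the pieces. Substituting the per-MS worst case $\mathbf{H}_{k}\to(1-\epsilon_{k})\hat{\mathbf{H}}_{k}$ into the decoupled minimization shows that, for every feasible $(\mathbf{A},\mathbf{\Omega})$, the inner value equals $\sum_{k}w_{k}f_{k}(\mathbf{A},\mathbf{\Omega})$ evaluated with the contracted channels. Since the constraints (\ref{eq:worst-case problem backhaul}) and (\ref{eq:worst-case problem power}) are untouched by this substitution, the outer maximization in (\ref{eq:worst-case problem}) becomes exactly problem (\ref{eq:original problem}) with each $\mathbf{H}_{k}$ replaced by $(1-\epsilon_{k})\hat{\mathbf{H}}_{k}$, which establishes the claimed equivalence.
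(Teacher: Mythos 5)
Your proposal is correct and follows essentially the same route as the paper's own proof: observe that each $\mathbf{\Delta}_{k}$ enters only the corresponding term $f_{k}(\mathbf{A},\mathbf{\Omega})$ so the inner minimization decouples across MSs, and then invoke the multiplicative singular-value uncertainty analysis of \cite{Loyka} (their Theorem 1) to conclude that the per-MS worst case is the isotropic contraction $\mathbf{\Delta}_{k}=-\epsilon_{k}\mathbf{I}$. The additional material you supply (the Sylvester-identity rewriting, the scalar monotonicity check, and the bound $\sigma_{\mathrm{min}}(\mathbf{I}+\mathbf{\Delta}_{k})\geq1-\epsilon_{k}$) merely fleshes out the step that the paper, like you, ultimately delegates to \cite{Loyka}.
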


\begin{proof} We first observe that the uncertainty matrix $\mathbf{\mathbf{\Delta}}_{k}$
affects only the corresponding rate function $f_{k}(\mathbf{A},\mathbf{\Omega})$
in (\ref{eq:rate MS k computed}). Therefore, the minimization versus
matrices $\mathbf{\mathbf{\Delta}}_{k}$ for $k\in\mathcal{N_{M}}$
in (\ref{eq:worst-case problem objective}) can be performed separately
for each $k$ by solving the problem $\min_{\mathbf{\mathbf{\Delta}}_{k}}f_{k}(\mathbf{A},\mathbf{\Omega})$.
It can be now easily seen, following \cite[Theorem 1]{Loyka}, that
the result of this minimization is obtained when $\mathbf{\mathbf{\Delta}}_{k}$
is such that $\mathbf{\mathbf{\Delta}}_{k}=-\epsilon_{k}\mathbf{I}$.
This concludes the proof.\end{proof}

Based on Lemma \ref{lem:problem singular uncertainty}, one can hence
solve problem (\ref{eq:worst-case problem}) by using the MM algorithm
in Table Algorithm 1 with only change of the channel matrices from
$\{\mathbf{H}_{k}\}_{k=1}^{N_{M}}$ to $\{(1-\epsilon_{k})\hat{\mathbf{H}}_{k}\}_{k=1}^{N_{M}}$.

\subsubsection{Ellipsoidal Uncertainty Model}

We now consider the ellipsoidal uncertainty model. To this end, for
simplicity and following related literature (see, e.g., \cite[Sec. 4.1]{Bjornson}),
we focus on multiple-input single-output (MISO) case where each MS
is equipped with a single antenna, i.e., $n_{M,k}=1$ for $k\in\mathcal{N_{M}}$.
Thus, we denote the channel vector corresponding to each MS $k$ by
$\mathbf{H}_{k}=\mathbf{h}_{k}^{\dagger}\in\mathbb{C}^{1\times n_{B}}$.
The actual channel $\mathbf{h}_{k}$ is then modeled as
\begin{equation}
\mathbf{h}_{k}=\hat{\mathbf{h}}_{k}+\mathbf{e}_{k},\label{eq:channel error}
\end{equation}
with $\hat{\mathbf{h}}_{k}$ and $\mathbf{e}_{k}$ being the presumed
CSI available at the central encoder and the CSI error, respectively.
The error vector $\mathbf{e}_{k}$ is assumed to be bounded within
the ellipsoidal region described as
\begin{equation}
\mathbf{e}_{k}^{\dagger}\mathbf{C}_{k}\mathbf{e}_{k}\leq1,\label{eq:elliptic model}
\end{equation}
for $k\in\mathcal{N_{M}}$ with the matrix $\mathbf{C}_{k}\succ\mathbf{0}$
specifying the size and shape of the ellipsoid \cite{Shen}.

Following the standard formulation, we consider here the ``dual''
problem of power minimization under signal-to-interference-plus-noise
ratio (SINR) constraints for all MSs (see \cite[Sec. 4.1]{Bjornson}
and references therein). This problem is stated as\begin{subequations}\label{eq:problem Pmin}
\begin{align}
\underset{\{\mathbf{R}_{k}\succeq\mathbf{0}\}_{k=1}^{N_{M}},\mathbf{\Omega}\succeq\mathbf{0}}{\mathrm{minimize}} & \sum_{i=1}^{N_{B}}\mu_{i}\cdot\mathrm{tr}\left(\sum_{k=1}^{N_{M}}\mathbf{E}_{i}^{\dagger}\mathbf{R}_{k}\mathbf{E}_{i}+\mathbf{\Omega}_{i,i}\right)\label{eq:problem Pmin objective}\\
\mathrm{s.t.}\,\,\,\, & \frac{\mathbf{h}_{k}^{\dagger}\mathbf{R}_{k}\mathbf{h}_{k}}{\sum_{j\in\mathcal{N_{M}}\setminus\{k\}}\mathbf{h}_{k}^{\dagger}\mathbf{R}_{j}\mathbf{h}_{k}+\mathbf{h}_{k}^{\dagger}\mathbf{\Omega}\mathbf{h}_{k}+1}\geq\Gamma_{k},\,\,\mathrm{for}\,\,\mathrm{all}\,\,\mathbf{e}_{k}\,\,\mathrm{with}\,\,(\ref{eq:elliptic model})\,\,\mathrm{and}\,\, k\in\mathcal{N_{M}},\label{eq:problem Pmin SINR constraint}\\
 & g_{\mathcal{S}}\left(\mathbf{A},\mathbf{\Omega}\right)\leq\sum_{i\in\mathcal{S}}C_{i},\,\,\mathrm{for\,\, all}\,\,\mathcal{S}\subseteq\mathcal{N_{B}},\label{eq:problem Pmin backhaul constraints}
\end{align}
\end{subequations}where the coefficients $\mu_{i}\geq0$ are arbitrary
weights, $\Gamma_{k}$ is the SINR constraint for MS $k$, and we
recall that we have $\mathbf{R}_{k}\triangleq\mathbf{A}_{k}\mathbf{A}_{k}^{\dagger}$
for $k\in\mathcal{N_{M}}$. The problem (\ref{eq:problem Pmin}) is
challenging since it contains infinite number of constraints in (\ref{eq:problem Pmin SINR constraint}).
But, following the conventional \textit{S-procedure} \cite[Appendix B.2]{Boyd},
we can translate the constraints (\ref{eq:problem Pmin SINR constraint})
into a finite number of linear constraints by introducing auxiliary
variables $\beta_{k}$ for $k\in\mathcal{N_{M}}$, as discussed in
the following lemma.

\begin{lemma}\label{lem:S lemma} The constraints (\ref{eq:problem Pmin SINR constraint})
hold if and only if there exist constants $\{\beta_{k}\geq0\}_{k=1}^{N_{M}}$
such that the condition
\begin{equation}
\left[\begin{array}{cc}
\mathbf{\Xi}_{k} & \mathbf{\Xi}_{k}\hat{\mathbf{h}}_{k}\\
\hat{\mathbf{h}}_{k}^{\dagger}\mathbf{\Xi}_{k} & \hat{\mathbf{h}}_{k}^{\dagger}\mathbf{\Xi}_{k}\hat{\mathbf{h}}_{k}-\Gamma_{k}
\end{array}\right]-\beta_{k}\left[\begin{array}{cc}
\mathbf{C}_{k} & \mathbf{0}\\
\mathbf{0} & -1
\end{array}\right]\succeq\mathbf{0}\label{eq:S Lemma constraint}
\end{equation}
is satisfied for all $k\in\mathcal{N_{M}}$ where we have defined
$\mathbf{\Xi}_{k}=\mathbf{R}_{k}-\Gamma_{k}\sum_{j\in\mathcal{N_{M}}\setminus\{k\}}\mathbf{R}_{j}-\Gamma_{k}\mathbf{\Omega}$
for $k\in\mathcal{N_{M}}$. \end{lemma}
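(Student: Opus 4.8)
The plan is to turn the semi-infinite family of robust SINR constraints (\ref{eq:problem Pmin SINR constraint})---one inequality for every admissible error $\mathbf{e}_{k}$ in the ellipsoid (\ref{eq:elliptic model})---into a single linear matrix inequality per user by means of the S-procedure. First I would clear the strictly positive denominator in (\ref{eq:problem Pmin SINR constraint}), so that the SINR constraint of MS $k$ becomes the equivalent quadratic inequality
\begin{equation}
\mathbf{h}_{k}^{\dagger}\mathbf{\Xi}_{k}\mathbf{h}_{k}-\Gamma_{k}\geq0,
\end{equation}
with $\mathbf{\Xi}_{k}=\mathbf{R}_{k}-\Gamma_{k}\sum_{j\in\mathcal{N_{M}}\setminus\{k\}}\mathbf{R}_{j}-\Gamma_{k}\mathbf{\Omega}$ as in the statement. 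Substituting the error model (\ref{eq:channel error}), i.e., $\mathbf{h}_{k}=\hat{\mathbf{h}}_{k}+\mathbf{e}_{k}$, and expanding $(\hat{\mathbf{h}}_{k}+\mathbf{e}_{k})^{\dagger}\mathbf{\Xi}_{k}(\hat{\mathbf{h}}_{k}+\mathbf{e}_{k})-\Gamma_{k}$ yields a real-valued Hermitian quadratic form in $\mathbf{e}_{k}$ whose homogenized matrix, in the augmented coordinate $[\mathbf{e}_{k}^{\dagger},1]^{\dagger}$, is exactly the first, unscaled block appearing in (\ref{eq:S Lemma constraint}). The robust requirement is therefore that this quadratic form be nonnegative on the solution set of the single quadratic inequality (\ref{eq:elliptic model}).

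Next I would invoke the S-procedure \cite[Appendix B.2]{Boyd}. Writing the ellipsoid (\ref{eq:elliptic model}) in homogenized form identifies its defining matrix as $\mathrm{diag}(\mathbf{C}_{k},-1)$, and the point $\mathbf{e}_{k}=\mathbf{0}$ is strictly feasible for (\ref{eq:elliptic model}). Since there is a single defining quadratic constraint together with a Slater (strictly feasible) point, the S-procedure is lossless, i.e., both necessary and sufficient: the robust SINR constraint holds for every admissible $\mathbf{e}_{k}$ if and only if there exists a multiplier $\beta_{k}\geq0$ for which the matrix in (\ref{eq:S Lemma constraint}) is positive semidefinite. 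Because the errors $\mathbf{e}_{k}$ enter only the $k$th SINR constraint, the argument applies independently for each $k\in\mathcal{N_{M}}$, which establishes the claimed equivalence.

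The only genuinely delicate point is the necessity (``only if'') direction, since sufficiency follows immediately by restricting the matrix inequality to the vectors $[\mathbf{e}_{k}^{\dagger},1]^{\dagger}$. For necessity I would rely on the losslessness of the S-procedure in the presence of a single quadratic constraint and a strict interior point, and---because the variables here are complex---invoke the Hermitian (complex) version of the S-lemma rather than its real counterpart, which remains lossless under the same strict-feasibility condition. The remaining work, namely expanding the quadratic form and matching coefficients to read off both block matrices in (\ref{eq:S Lemma constraint}), is routine bookkeeping.
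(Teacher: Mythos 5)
Your route is the same as the paper's---the paper's entire proof is a one-line appeal to the S-procedure of \cite[Appendix B.2]{Boyd}---and most of your steps are sound: clearing the strictly positive denominator to get $\mathbf{h}_{k}^{\dagger}\mathbf{\Xi}_{k}\mathbf{h}_{k}-\Gamma_{k}\geq0$, homogenizing in $[\mathbf{e}_{k}^{\dagger},1]^{\dagger}$ (your matrix $M_{1}$, the first block in (\ref{eq:S Lemma constraint}), is correct), observing that $\mathbf{e}_{k}=\mathbf{0}$ is a Slater point so the (Hermitian) S-lemma is lossless, and treating each $k$ separately. The gap is in the final coefficient-matching step, and it is not mere bookkeeping: with your convention that the ellipsoid's homogenized matrix is $\mathrm{diag}(\mathbf{C}_{k},-1)$ (the ``$\leq0$'' form), the S-procedure certificate is $M_{1}+\beta_{k}\,\mathrm{diag}(\mathbf{C}_{k},-1)\succeq\mathbf{0}$, i.e., $+\beta_{k}\mathbf{C}_{k}$ in the upper-left block and $-\beta_{k}$ in the lower-right corner (this is the form appearing in \cite{Shen}), whereas (\ref{eq:S Lemma constraint}) as printed has the opposite sign of the multiplier. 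The discrepancy kills precisely the direction you call immediate: restricting the printed LMI to vectors $[\mathbf{e}_{k}^{\dagger},1]^{\dagger}$ yields
\begin{equation}
f_{1}(\mathbf{e}_{k})\geq\beta_{k}\left(\mathbf{e}_{k}^{\dagger}\mathbf{C}_{k}\mathbf{e}_{k}-1\right),\nonumber
\end{equation}
and on the ellipsoid (\ref{eq:elliptic model}) the right-hand side is nonpositive, so no conclusion about $f_{1}(\mathbf{e}_{k})\geq0$ follows. With the corrected sign one instead gets $f_{1}(\mathbf{e}_{k})\geq\beta_{k}(1-\mathbf{e}_{k}^{\dagger}\mathbf{C}_{k}\mathbf{e}_{k})\geq0$, which is the sufficiency you wanted.

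In fact the condition exactly as printed is not equivalent to (\ref{eq:problem Pmin SINR constraint}): take $n_{B}=1$, $\hat{\mathbf{h}}_{k}=0$, $\mathbf{C}_{k}=1$, $\mathbf{\Xi}_{k}=1$ (e.g., $\mathbf{R}_{k}=1$, all other $\mathbf{R}_{j}=\mathbf{\Omega}=0$) and $\Gamma_{k}=1/2$; then $\beta_{k}=1/2$ makes the printed matrix $\mathrm{diag}(1/2,\,0)\succeq\mathbf{0}$, yet the SINR constraint fails at the admissible error $\mathbf{e}_{k}=0$, where the received SINR is zero. So the lemma's displayed inequality carries a sign typo (equivalently, the second matrix should be $\mathrm{diag}(-\mathbf{C}_{k},1)$, the homogenization of $1-\mathbf{e}_{k}^{\dagger}\mathbf{C}_{k}\mathbf{e}_{k}\geq0$), and your proof, by asserting that the S-procedure certificate ``is exactly'' the printed matrix, silently reproduces the typo rather than catching it. Once the sign is fixed, your argument is complete and coincides with the paper's intended proof.
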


\begin{proof}It directly follows by applying the S-procedure \cite[Appendix B.2]{Boyd}.
\end{proof}

By transforming the constraint (\ref{eq:problem Pmin SINR constraint})
into (\ref{eq:S Lemma constraint}), we obtain a problem that falls
again in the class of DC problems \cite{Beck}. Therefore, one can
easily derive the MM algorithm, similar to Table Algorithm 1, by linearizing
the non-convex terms in the constraint (\ref{eq:problem Pmin backhaul constraints}).
The algorithm is guaranteed to converge to a stationary point of problem
(\ref{eq:problem Pmin}) (see, e.g., \cite[Theorem 1]{Luo}, \cite[Sec. 1.3.3]{Beck}
and \cite[Theorem 3]{Scutari}).

\section{Separate Design of Precoding and Compression\label{sec:Separate}}

In this section, we discuss a simpler approach in which the precoding
matrix $\mathbf{A}$ is fixed a priori to some standard scheme, such
as ZF, MMSE or weighted sum-rate maximizing precoding, by neglecting
the compression noise. The compression covariance $\mathbf{\Omega}$
is then designed separately so as to maximize the weighted sum-rate.

\subsection{Selection of the Precoding Matrix\label{sub:separate Precoding}}

The precoding matrix $\mathbf{A}$ is first selected according to
some standard criterion \cite{RZhang}-\cite{Luo} by neglecting the
compression noise. A subtle issue arises when selecting the precoding
matrix $\mathbf{A}$ that requires some discussion. Specifically,
the design of $\mathbf{A}$ should be done by assuming a reduced power
constraint, say $\gamma_{i}P_{i}$ for some $\gamma_{i}\in(0,1)$
for $i\in\mathcal{N_{B}}$. The power offset factor $\gamma_{i}\in(0,1)$
is necessary since the final signal $\mathbf{x}_{i}$ transmitted
by each BS $i$ is given by (\ref{eq:Gaussian test channel each BS})
and is thus the sum of the precoded signal $\mathbf{E}_{i}^{\dagger}\mathbf{A}\mathbf{s}$
and the compression noise $\mathbf{q}_{i}$. Therefore, if the power
of the precoded part $\mathbf{E}_{i}^{\dagger}\mathbf{A}\mathbf{s}$
is selected to be equal to the power constraint $P_{i}$, the compression
noise power would be forced to be zero. But this is possible only
when the backhaul capacity grows to infinity due to (\ref{eq:original problem backhaul}).
As a result, in order to make the compression feasible, one needs
to properly select the parameters $\gamma_{1},\ldots,\gamma_{N_{B}}$
depending on the backhaul constraints.

\subsection{Optimization of the Compression Covariance\label{sub:separate Compression cov}}

Having fixed the precoding matrix $\mathbf{A}$, the problem then
reduces to solving problem (\ref{eq:original problem}) only with
respect to the compression covariance $\mathbf{\Omega}$. The obtained
problem is thus a DC problem which can be tackled via the MM algorithm
described in Table Algorithm 1 by limiting the optimization at Step
2 only to matrix $\mathbf{\Omega}$. It is observed that, as discussed
above, this problem may not be feasible if the parameters $\gamma_{i}$,
$i\in\mathcal{N_{B}}$, are too large. In practice, one can set these
parameters using various search strategies such as bisection.

\section{Numerical Results\label{sec:Numerical-Results}}

In this section, we present numerical results in order to investigate
the advantage of the proposed approach based on multivariate compression
and on the joint design of precoding and compression as compared to
the conventional approaches based on independent compression across
the BSs and separate design. We will focus on the sum-rate performance
$R_{\mathrm{sum}}=\sum_{k\in\mathcal{N_{M}}}R_{k}$ (i.e., we set
$w_{k}=1$ in (\ref{eq:original problem objective})). We also assume
that there is one MS active in each cell and we consider three cells,
so that we have $N_{B}=N_{M}=3$. Every BS is subject to the same
power constraint $P$ and has the same backhaul capacity $C$, i.e.,
$P_{i}=P$ and $C_{i}=C$ for $i\in\mathcal{N_{B}}$.

\subsection{Wyner Model}

\begin{figure}
\centering\includegraphics[width=12cm,height=9cm]{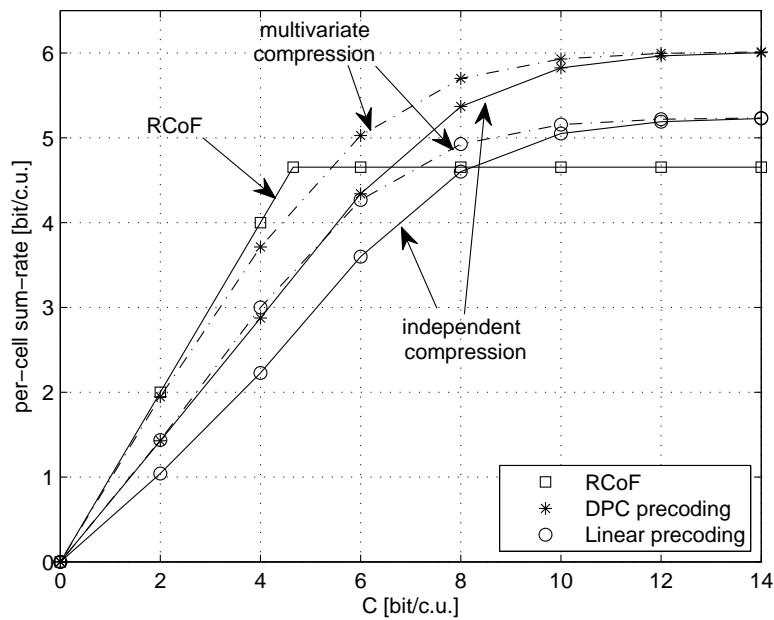}

\caption{\label{fig:graph Wyner}Per-cell sum-rate versus the backhaul capacity
$C$ for the circular Wyner model \cite{Gesbert} with $P=20$ dB
and $g=0.5$.}
\end{figure}

We start by considering as a benchmark the performance in a simple
circulant Wyner model. In this model, all MSs and BSs have a single
antenna and the channel matrices $\mathbf{H}_{k,j}$ reduce to deterministic
scalars given as $\mathbf{H}_{k,k}=1$ for $k=1,2,3$ and $\mathbf{H}_{k,j}=g\in[0,1]$
for $j\neq k$ \cite{Gesbert}. In Fig. \ref{fig:graph Wyner}, we
compare the proposed scheme with joint design of precoding and compression
with state-of-the-art techniques, namely the compressed DPC of \cite{Simeone},
which corresponds to using DPC precoding with independent quantization,
and reverse Compute-and-Forward (RCoF) \cite{Hong}. We also show
the performance with linear precoding for reference. It is observed
that multivariate compression significantly outperforms the conventional
independent compression strategy for both linear and DPC precoding.
Moreover, RCoF in \cite{Hong} remains the most effective approach
in the regime of moderate backhaul $C$, although multivariate compression
allows to compensate for most of the rate loss of standard DPC precoding
in the low-backhaul regime%
\footnote{The saturation of the rate of RCoF for sufficiently large $C$ is
due to the integer constraints imposed on the function of the messages
to be computed by the MSs \cite{Hong}.%
}.

\subsection{General Fading Model}

In this subsection, we evaluate the average sum-rate performance as
obtained by averaging the sum-rate $R_{\mathrm{sum}}$ over the the
realization of the fading channel matrices. The elements of the channel
matrix $\mathbf{H}_{k,i}$ between the MS in the $k$th cell and the
BS in the $i$th cell are assumed to be i.i.d. complex Gaussian random
variables with $\mathcal{CN}(0,\alpha^{|i-k|})$ in which we call
$\alpha$ the inter-cell channel gain. Moreover, each BS is assumed
to use two transmit antennas while each MS is equipped with a single
receive antenna. In the separate design, we assume that the precoding
matrix $\mathbf{A}$ is obtained via the sum-rate maximization scheme
in \cite{Huang} under the power constraint $\gamma P$ for each BS
with $\gamma\in(0,1)$ selected as discussed in Sec. \ref{sub:separate Precoding}.
Note that the algorithm of \cite{Huang} finds a stationary point
for the sum-rate maximization problem using the MM approach, similar
to Table Algorithm 1 without consideration of the quantization noises.

\begin{figure}
\centering\includegraphics[width=12cm,height=9cm]{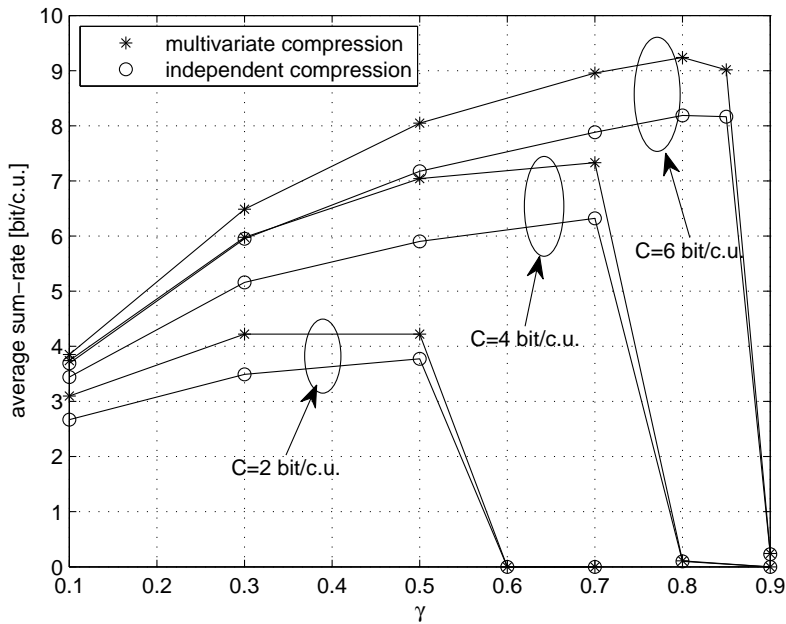}

\caption{\label{fig:graph Gamma}Average sum-rate versus the power offset factor
$\gamma$ for the separate design of linear precoding and compression
in Sec. \ref{sec:Separate} with $P=5$ dB and $\alpha=0$ dB.}
\end{figure}

Fig. \ref{fig:graph Gamma} demonstrates the impact of the power offset
factor $\gamma$ on the separate design of linear precoding and compression
described in Sec. \ref{sec:Separate} with $P=5$ dB and $\alpha=0$
dB. Increasing $\gamma$ means that more power is available at each
BS, which generally results in a better sum-rate performance. However,
if $\gamma$ exceeds some threshold value, the sum-rate is significantly
degraded since the problem of optimizing the compression covariance
$\mathbf{\Omega}$ given the precoder $\mathbf{A}$ is more likely
to be infeasible as argued in Sec. \ref{sub:separate Precoding}.
This threshold value grows with the backhaul capacity, since a larger
backhaul capacity allows for a smaller power of the quantization noises.
Throughout the rest of this section, the power offset factor $\gamma$
is optimized via numerical search.

\begin{figure}
\centering\includegraphics[width=12cm,height=9cm]{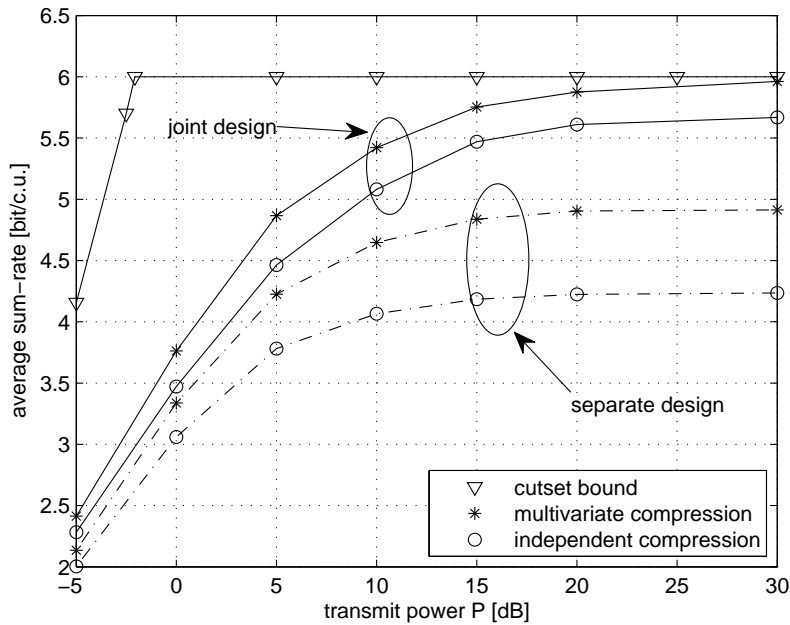}

\caption{\label{fig:graph SNR}Average sum-rate versus the transmit power $P$
for linear precoding with $C=2$ bit/c.u. and $\alpha=0$ dB.}
\end{figure}

In Fig. \ref{fig:graph SNR}, the average sum-rate performance of
the linear precoding and compression schemes is plotted versus the
transmit power $P$ with $C=2$ bit/c.u. and $\alpha=0$ dB. It is
seen that the gain of multivariate compression is more pronounced
when each BS uses a larger transmit power. This implies that, as the
received SNR increases, more efficient compression strategies are
called for. In a similar vein, the importance of the joint design
of precoding and compression is more significant when the transmit
power is larger. Moreover, it is seen that multivariate compression
is effective in partly compensating for the suboptimality of the separate
design. For reference, we also plot the cutset bound which is obtained
as $\min\{R_{\mathrm{full}},3C\}$ where $R_{\mathrm{full}}$ is the
sum-capacity achievable when the BSs can fully cooperate under per-BS
power constraints. We have obtained the rate $R_{\mathrm{full}}$
by using the inner-outer iteration algorithm proposed in \cite[Sec. II]{Huh}.
It is worth noting that only the proposed joint design with multivariate
compression approaches the cutset bound as the transmit power increases.

\begin{figure}
\centering\includegraphics[width=12cm,height=9cm]{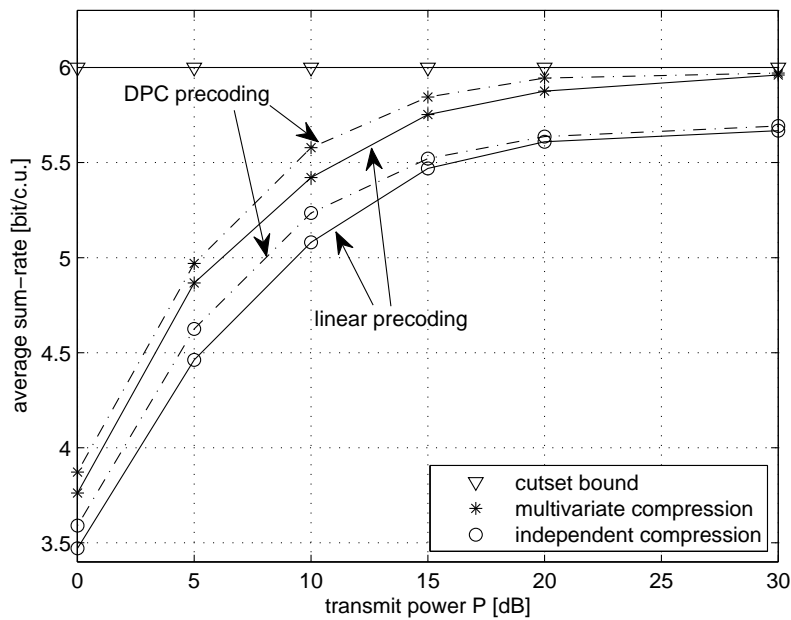}

\caption{\label{fig:graph DPC}Average sum-rate versus the transmit power $P$
for the joint design in Sec. \ref{sec:Joint} with $C=2$ bit/c.u.
and $\alpha=0$ dB.}
\end{figure}

In Fig. \ref{fig:graph DPC}, we compare two precoding methods, DPC
and linear precoding, by plotting the average sum-rate versus the
transmit power $P$ for the joint design in Sec. \ref{sec:Joint}
with $C=2$ bit/c.u. and $\alpha=0$ dB. For DPC, we have applied
Algorithm 1 with a proper modification for all permutations $\pi_{\mathrm{DPC}}$
of MSs' indices $\mathcal{N_{M}}$ and took the largest sum-rate.
Unlike the conventional broadcast channels with perfect backhaul links
where there exists constant sum-rate gap between DPC and linear precoding
at high SNR (see, e.g., \cite{Lee}), Fig. \ref{fig:graph DPC} shows
that DPC is advantageous only in the regime of intermediate $P$ due
to the limited-capacity backhaul links. This implies that the overall
performance is determined by the compression strategy rather than
precoding method when the backhaul capacity is limited at high SNR.

\begin{figure}
\centering\includegraphics[width=12cm,height=9cm]{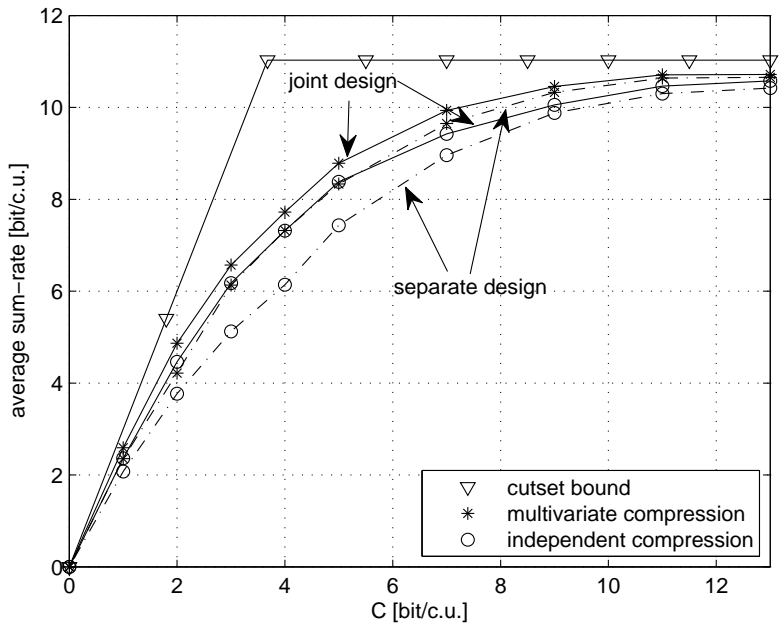}

\caption{\label{fig:graph Ci}Average sum-rate versus the backhaul capacity
$C$ for linear precoding with $P=5$ dB and $\alpha=0$ dB.}
\end{figure}

Fig. \ref{fig:graph Ci} plots the average sum-rate versus the backhaul
capacity $C$ for linear precoding with $P=5$ dB and $\alpha=0$
dB. It is observed that when the backhaul links have enough capacity,
the benefits of multivariate compression or joint design of precoding
and compression become negligible since the overall performance becomes
limited by the sum-capacity achievable when the BSs are able to fully
cooperate with each other. It is also notable that the separate design
with multivariate compression outperforms the joint design with independent
quantization for backhaul capacities larger than $5$ bit/c.u.

\begin{figure}
\centering\includegraphics[width=12cm,height=9cm]{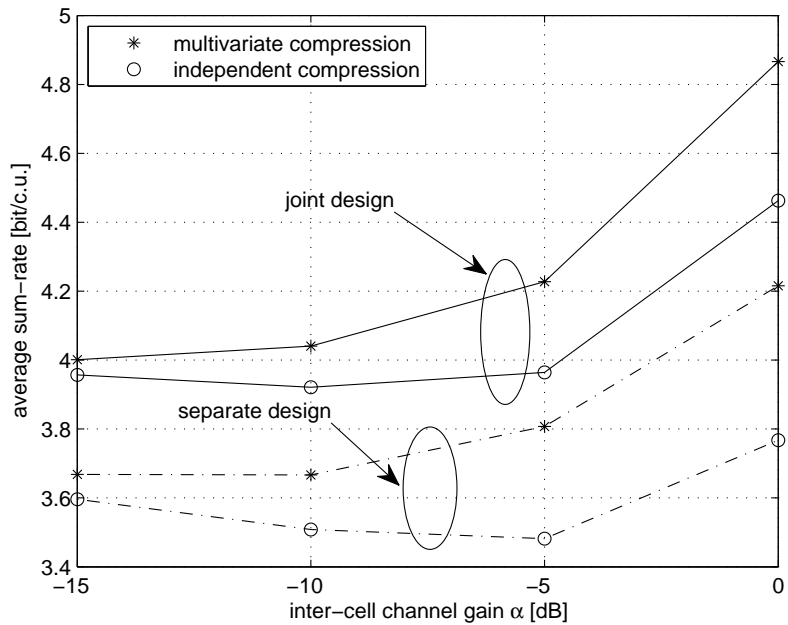}

\caption{\label{fig:graph alpha}Average sum-rate versus the inter-cell channel
gain $\alpha$ for linear precoding with $C=2$ bit/c.u. and $P=5$
dB.}
\end{figure}

Finally, we plot the sum-rate versus the inter-cell channel gain $\alpha$
for linear precoding with $C=2$ bit/c.u. and $P=5$ dB in Fig. \ref{fig:graph alpha}.
We note that the multi-cell system under consideration approaches
the system consisting of $N_{B}$ parallel single-cell networks as
the inter-cell channel gain $\alpha$ decreases. Thus, the advantage
of multivariate compression is not significant for small values of
$\alpha$, since introducing correlation of the quantization noises
across BSs is helpful only when each MS suffers from a superposition
of quantization noises emitted from multiple BSs.

\section{Conclusions\label{sec:Conclusions}}

In this work, we have studied the design of joint precoding and compression
strategies for the downlink of cloud radio access networks where the
BSs are connected to the central encoder via finite-capacity backhaul
links. Unlike the conventional approaches where the signals corresponding
to different BSs are compressed independently, we have proposed to
exploit multivariate compression of the signals of different BSs in
order to control the effect of the additive quantization noises at
the MSs. The problem of maximizing the weighted sum-rate subject to
power and backhaul constraints was formulated, and an iterative MM
algorithm was proposed that achieves a stationary point of the problem.
Moreover, we have proposed a novel way of implementing multivariate
compression that does not require joint compression of all the BSs'
signals but is based on successive per-BS estimation-compression steps.
Robust design with imperfect CSI was also discussed. Via numerical
results, it was confirmed that the proposed approach based on multivariate
compression and on joint precoding and compression strategy outperforms
the conventional approaches based on independent compression and separate
design of precoding and compression strategies. This is especially
true when the transmit power or the inter-cell channel gain are large,
and when the limitation imposed by the finite-capacity backhaul links
is significant.

\end{document}